\newcommand{\llim}[3]{\ensuremath{ \lim\limits_{ #1 \rightarrow #2} #3 }}
\newcommand{\fclass}[2]{\ensuremath{  \mathbb{#1}^{\, #2} }}
\newcommand\smallO{
	\mathchoice
	{{\scriptstyle\mathcal{O}}}% \displaystyle
	{{\scriptstyle\mathcal{O}}}% \textstyle
	{{\scriptscriptstyle\mathcal{O}}}% \scriptstyle
	{\scalebox{.7}{$\scriptscriptstyle\mathcal{O}$}}%\scriptscriptstyle
}
\newcommand{\bigoh}[1]{ \ensuremath{ \smallO \left(  #1 \right) }   }
\newcommand{\diffc}[1]{ \ensuremath{\mathcal{D} #1}}
\newtheorem{theorem}{Theorem}
\newtheorem{corollary}{Corollary}
\newtheorem{proposition}{Proposition}
\newtheorem{definition}{Definition}
\newtheorem{remark}{Remark}
	\newcommand {\Cl}[1] {\ensuremath{ C\ell_{#1} } }
	\newcommand{\partd}{\ensuremath{\mathcal{ P} }}
\begin{document}
 
	\title{The Burgers equations and the Born rule}
 
	\author{Dimiter Prodanov}
%%%%%%%%%%%%%%%%%%%%%%%%%%%%%%%%%%%%%%%%%%

\begin{abstract}
The present work demonstrates the connections between the  Burgers, diffusion, and Schr\"odinger's equations. 
The starting point is a formulation of the stochastic mechanics, which is modelled along the lines of the scale relativity theory.
The resulting statistical description obeys the  Fokker-Planck equation.
This paper further demonstrates the  connection between the two approaches,  embodied by the study of the Burgers equation, which from this perspective appears as a stochastic geodesic equation. 
The main  result of the article is the transparent derivation of the Born rule from the starting point of a complex stochastic process, based on a complex Fokker-Planck formalism. 

keywords: { Burgers equation; Schroedinger equation; diffusion; stochastic mechanics; scale relativity }
 
\end{abstract}

%
%% The fields PACS, MSC, and JEL may be left empty or commented out if not applicable
%\PACS{02.30.Jr, 02.30.Uu, 02.50.Ga, 02.70.Uu}
%\MSC{60J65, 76R50, 65R20  }
%\JEL{}

\address{
	 Environment, Health and Safety Department, IMEC, Leuven, Belgium \\
	 MMSIP, IICT, Bulgarian Acacdemy of Sciences, Bulgaria
}

\maketitle

%%%%%%%%%%%%%%%
%  Section
%%%%%%%%%%%%%%%
\section{Introduction}
\label{sec:intro}

The present paper reveals deep connections between the Burgers equation and the Born's rule in quantum mechanics. 
The Born rule assigns a probability to any possible outcome of a quantum measurement. It asserts that the probability, associated with an experimental outcome is equal to the squared modulus of the wave function \cite[Ch. 2]{Dirac1981}. 
The Born rule completes the Copenhagen interpretation of quantum mechanics. 
On the other hand, it leaves open the question on how these probabilities are to be interpreted. 
Some authors managed to derive the rule using the machinery of Hilbert spaces (i.e. the Gleason theorem \cite{Gleason1957}), while others resorted to operational approaches \cite{Saunders2004, Masanes2019}.
However, using such abstractions, it may be difficult to distinguish epistemological from ontological aspects of the underlying physics. 
The Born's rule was also implicitly demonstrated in the scope of scale relativity theory \cite{Nottale2007}, which was the inspiration of the present work. 

The Burgers' equation was initially formulated by Bateman while modelling the weakly viscous liquid motion \cite{Bateman1915}.
The equation can be derived from the full Navier-Stocks equations of fluid dynamics under some simplifying assumptions.
The equation was later studied extensively by Burgers as a cartoon model of turbulence as an attempt for a simplified mean field theory of  turbulence \cite{Burgers1974}. 
The Burgers' equation reads
%%%%%%%%%%%%
%  Burgers
%%%%%%%%%%%%
\begin{equation}\label{eq:burgers}
 \partial_t a    + a \, \partial_x a   - \nu \partial_{xx} a = 0
\end{equation}
%%%%%%%%%%%%%%%%
In the initial applications, it was assumed that the viscosity parameter is real, however the equation can be analytically continued for complex values of $\nu$.
The present paper uses all three manifestations of the viscosity coefficient -- positive, negative and imaginary -- hence the plural form in its title.

There is abundant mathematical literature about the Burgers equation.
The one-dimensional solutions are surveyed in \cite{Benton1972}, while the similarity solutions have been investigated in \cite{Ozis2017}.
At present, the number of applications of the Burgers' equation is very diverse. 
The equation has numerous applications in modelling of a wide variety of physical processes.
It has been used to model physical systems, such as  surface perturbations, acoustic waves, electromagnetic waves, density waves, or traffic (see for example \cite{Gurbatov1992}).
The stochastic representation  of the Burgers' equation can be traced back to the seminal works of Busnello et al. \cite{Busnello1999, Busnello2005}.

On the other hand, the Burgers' equation can be introduced from the drastically different perspective. 
It represents the equation of the drift function of a Brownian diffusion, considered as a stochastic process.
This can be rigorously demonstrated using the apparatus of It\^o calculus.  
Recent literature employing this perspective  includes \cite{Constantin2007, Eyink2014}, however, the focus there was on the inviscid Burgers equation.
Einik and Drivas show that the entropy solutions of Burgers have Markov  processes of  Lagrangian trajectories backward in time for which the Burgers velocity is a backward martingale.
A more general approach for the viscous Navier-Stokes equations  was introduced by Constantin and Iyer who derived a probabilistic representation of the deterministic 3-dimensional Navier-Stokes equations \cite{Constantin2007, Constantin2011}.

The Burgers' equation also arises in the theory of the Langevin equation as the resulting equation for the velocity of a particle subject to random forces. 
It is in this latter aspect that the equation can be linked to the theory of quantum mechanics and the Schr\"odinger equation. 
The connection arises in two ways -- by Nelson's \textit{stochastic mechanics} and by Nottale's \textit{scale relativity} theory. 
The two theories are complementary to each other.
In scale relativity any material particle is assumed to follow fractal geodesic path, while the stochastic mechanics assumes that particles follow a conservative Brownian motion \cite{Carlen1984}. 
Such Brownian motion can be used the model fractal trajectories in terms of their stochastic Markov process embedding \cite{Prodanov2018b}. 

It is noteworthy that the Burgers equation can be linearised exactly by means of the Cole-Hopf transformation, which brings the non-linear equation into one-to-one correspondence with the diffusion equation.
From this perspective, the findings of the above works are not surprising. 
It is also remarkable that, if the viscosity coefficient becomes imaginary, the Cole-Hopf transformation produces the Schr\"odinger equation.
This is not only a formal similarity as the complex viscosity coefficient can be endowed with a precise meaning as will be demonstrated in the present work. 

What is not widely recognized is that the Cole-Hopf transformation can be extended into multiple dimensions.
The objective of the present paper is to use such non-homogenous Cole-Hopf transform in order to investigate some aspects of the Burgers' equation in view of multidimensional linear equations, notably the Schr\"odinger's  equation.
This is achieved in the general manner using the techniques of the Geometric algebra and It\^o calculus.

Moreover, assuming reversibility of the diffusion, a complex structure can be imposed over the process, which naturally leads to the Schr\"odinger equation and its conjugate as exact linearizations of the corresponding Burgers equations with imaginary viscosity.
This may seem as an arbitrary choice, however it is not so, since introducing such complex structure leads naturally to the Born's rule for the interpretation of the squared modulus of the Schr\"odinger's $\psi$ function as the probability density, appearing in the Fokker-Planck equation of the initial pair of diffusion processes.
 
As applied to quantum  mechanics, it is demonstrated here that the Born's rule is not arbitrary but stems from the microscopic reversibility in a stochastic sense. 
To the best of this author's knowledge, such derivation has not been demonstrated in literature.

Derivation of the results presented in this manuscript is facilitated by the language of the Geometric Algebra, which allows for a straightforward verification in computer algebra systems  \cite{Prodanov2016}.

%%%%%%%%%%%%%%%%
% Section
%%%%%%%%%%%%%%%%
\section{Stochastic mechanics}\label{sec:char}

The equations of stochastic mechanics were formulated initially  by F\'enyes \cite{Fenyes1952} and Weizel \cite{Weizel1953} and later developed by Nelson \cite{Nelson1966} towards a comprehensive interpretation of quantum mechanics.
The stochastic mechanics draws on the formal similarities between the classical statistical mechanics and the Schr\"odinger equation. 
In the treatment of the stochastic mechanics, quantum phenomena are described in terms of Brownian motions instead of wave functions. 

The main equation of motion is in fact the Langevin equation employing a Wiener driving process, which can be handled by the apparatus of It\^o calculus.

Consider the stochastic integral equation with continuous drift and diffusion coefficients
\[
X_t - x_0 = \int_{0}^{t} a(X, t) d t + \int_{0}^{t} b(X, t) d W_t
\]
where $a(X,t)$ and $b(X,t)$ are smooth functions of the variables  and $d W_t$ is an increment of a Wiener process $ d W_{  t}  \sim N (0, d t)$ adapted to the past filtration $\mathbb{F}_{t>0}$ -- i.e. starting from the initial state $t=0$ and $x_0$ is the  deterministic initial condition. 
The requirement of a filtration is essential for the development of the stochastic calculus.
This means that, in a way, one is recording the response of the system at a pre-defined but infinite sequence of intervals in the past.
The differential form of this equation is the Langevin equation
  \[
 d X_t^{+}  = a(X, t) d t + b(X, t) d W_t 
 \]
 called also stochastic differential equation in the mathematical literature. 
 The superscript indicates adaptation to the past filtration. 
 In most of the derivations in the subsequent sections the dependencies of the \textit{a} and \textit{b} parameters will be assumed but not denoted explicitly. 
 
 The drift and the diffusion fields (e.g. coefficient) can be calculated in the following way.
 %Suppose that $ \mathbb{E}  X_t = x (t) $ for a fixed moment \textit{t}.
 Following Nelson the forward and backward  \textit{drift}, respectively \textit{diffusion} coefficients, can be identified as the ensemble-averaged velocities \cite{Nelson1966, Guerra1995}.
% \begin{flalign}
%  a  & = \llim{dt}{0}{  \mathbb{E}  \left( \left. \frac{X_{t+ d t} - X_t }{ dt} \right| X_t=x \right) } \\
% |b| & = \llim{dt}{0}{  \mathbb{E}  \left( \left. \frac{\left| X_{t+ d t} - X_t \right|}{ \sqrt{dt} } \right| X_t=x \right)} 
% \end{flalign}
% The first equation means that the drift field can be interpreted as an ensemble-averaged velocity field. 
 Therefore, one can define a pair of differential operators (e.g.  directional derivatives) in the mean sense \cite{Nelson2012}:
 %%%%%%%%%%%%%
 %  Def
 %%%%%%%%%%%%%
 \begin{definition}[Mean velocities]\label{def:nelsond}
 	\[
 	\mathcal{D}^{+}_t X:= \llim{dt}{0}{  \mathbb{E}  \left( \left. \frac{X_{t+ d t} - X_t }{ dt} \right| X_t=x \right) }, 
 	\quad \mathcal{D}_t^{-} X:= \llim{dt}{0}{  \mathbb{E}  \left( \left. \frac{ X_t - X_{t- d t} }{ dt} \right| X_t=x \right) }
 	\]
 \end{definition}
%%%%%%%%%%%%%%%%
 Defined in this way, it is not necessary to resort to the techniques of non-standard analysis as initially explored by Nelson. 
 In this way,
 \[
 a = \mathcal{D}^{+}_t X^{+}_t
 \]
 In a similar way, the diffusion coefficient can be rigorously interpreted as the expectation of the \textit{fractional velocity} \cite{Prodanov2018, Prodanov2018b}:
 \[
 |b|  = \llim{dt}{0}{  \mathbb{E}  \left( \left. \frac{\left| X_{t+ d t} - X_t \right|}{ \sqrt{dt} } \right| X_t=x \right)} 
 \]
The evolution of the probability density of the stochastic process can be computed from the forward Fokker-Planck equation \cite{Oksendal2003}
\begin{flalign}
	\partial_t \rho  +   \partial_x  \left( a \rho  \right)  - \frac{1}{2} \partial_{xx} \left( b^2  \rho \right) = 0
\end{flalign}
which can be recognized as a conservation law 
\[
{\partial_t} \rho  +  {\partial_x} j =0
\]
for the probability current 
%\begin{equation}\label{eq:probcurrent}
$ j:=  a \rho-  \frac{1}{2}\partial_{x} b^2 \rho $.
%\end{equation}
One can define also a backwards process in the sense of the integral equation
\[
x_T - X_t   = \int_{t}^{T }\hat{a}(X, t)  d t + \int_{t}^{T }\hat{b}(X, t) d \hat{W}_t  
\]
which is adapted to the future filtration $\mathbb{F}_{t<T}$ -- i.e. starting from the final state -- and $x_T$ is the  deterministic final condition. 
The backwards diffusion process leads to the anticipative (i.e. anti-It\^o) stochastic integrals.
It should be noted that the anticipative stochastic integrals are, in a sense, dual to the more common It\^o integrals. 
The differential form can be written in a similar way as
\[
d X_{t}^{-}   = \hat{a}(X, t)  d t + \hat{b}(X, t) d \hat{W}_t  
\]
Then, in a similar way
\[
\mathcal{D}_t^{-} X_t^{-} = \hat{a}
\]
Note that in general, $a$ and $\hat{a}$ are different velocity fields! 

%This allows for defining a mean backward velocity in a similar way.
%%%%%%%%%%%%%
%% Def
%%%%%%%%%%%%%%
% \begin{definition}[Mean backward velocity]\label{def:nelsond1}
%	\[
%	\mathcal{D}_t X^{-}:= \llim{dt}{0}{  \mathbb{E}  \left( \left. \frac{X_{t} - X_{t-dt} }{ dt} \right| X_t=x \right) }
%	\]
%	%adapted to the future filtration $\mathbb{F}_{t<T}$.
%\end{definition}
%%%%%%%%%%%%%%%%%
Another result will be important for the subsequent presentation.
According to F\"ollmer \cite{Follmer1986}:
\begin{proposition}
	Suppose that $b =\hat{b}$ and 
	%Under the finite energy technical condition  given by F\"ollmer \cite{Follmer1986} 
	$$
	E \left( \int_{0}^{T} a (X,t)^2 dt \right) < \infty 
	$$
	Then the backwards diffusion process has the same \textit{density} $\rho$.  
\end{proposition}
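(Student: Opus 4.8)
The plan is to read ``the backwards diffusion process'' as the genuine time reversal of the forward diffusion and to extract its one--point law from that identification. Set $\hat X_s := X_{T-s}$ and equip it with the reversed filtration $\hat{\mathbb{F}}_s := \sigma(X_u : u \ge T-s)$, so that $\hat X$ is adapted to $\mathbb{F}_{t<T}$ in the sense used in the text; equivalently, this is $X_t$ itself read against the future filtration. Because $\hat X_s$ is literally the random variable $X_{T-s}$, its law at reversed time $s$ is the law of $X_{T-s}$, so at physical time $t=T-s$ its density is exactly $\rho(\cdot,t)$. Hence the marginal density is preserved for free, and the content of the statement is that this re--indexed process is again an It\^o diffusion of the stated form --- a semimartingale with respect to $\hat{\mathbb{F}}$ carrying a finite drift $\hat a$ and diffusion $\hat b=b$ --- which is F\"ollmer's theorem and is where the hypotheses are used.

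For the diffusion coefficient this is immediate from quadratic variation: the quadratic variation of $\hat X$ on $[0,s]$ equals that of $X$ on $[T-s,T]$, namely $\int_{T-s}^{T} b^2(X_r,r)\,dr$, so $d[\hat X]_s = b^2(\hat X_s,T-s)\,ds$ and therefore $\hat b=b$. Thus the hypothesis $b=\hat b$ is a structural consequence of time reversal rather than an extra assumption. The substantive step is the existence of the drift, and this is where $E\big(\int_0^T a(X,t)^2\,dt\big)<\infty$ enters: following F\"ollmer one approximates $\hat X$ by its values on a partition, writes the backward increments against $\hat{\mathbb{F}}$, and lets the mesh tend to zero; the $L^2$ bound on $a$ is exactly what makes the conditional mean backward velocities converge and the residual a genuine $\hat{\mathbb{F}}$--Brownian integral, yielding $d\hat X_s = \hat a(\hat X_s,T-s)\,ds + b(\hat X_s,T-s)\,d\hat W_s$ with a finite drift $\hat a\in L^2$ and an $\hat{\mathbb{F}}$--Wiener process $\hat W$.

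Having the decomposition, I would close the loop at the level of the Fokker--Planck equations so that the density claim becomes transparent. The forward density obeys $\partial_t\rho + \partial_x(a\rho) - \tfrac12\partial_{xx}(b^2\rho)=0$. The time--reversed process $Y_s:=X_{T-s}$ is a forward (in $s$) diffusion with drift $-\hat a(\cdot,T-s)$ and diffusion $b$, so its density $\rho(\cdot,T-s)$ satisfies the corresponding forward Kolmogorov equation, which translated back to $t=T-s$ reads $\partial_t\rho + \partial_x(\hat a\rho) + \tfrac12\partial_{xx}(\hat b^2\rho)=0$. Subtracting the two and using $\hat b=b$ leaves $\partial_x\big(\partial_x(b^2\rho) - (a-\hat a)\rho\big)=0$, so under the decay of $\rho$ at infinity one recovers the Nelson relation $a-\hat a = \rho^{-1}\partial_x(b^2\rho)$; in particular the same $\rho$ solves both equations, which is the assertion. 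The main obstacle is the existence step of the second paragraph: a priori the backward drift involves $\partial_x\log\rho$ and need not be integrable, and it is precisely the hypothesis on $\int_0^T a^2\,dt$ --- together with the smoothness and positivity of $\rho$ inherited from the forward parabolic equation --- that forces $\hat a$ into $L^2$ and makes the time reversal a bona fide It\^o diffusion.
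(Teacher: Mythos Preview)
The paper does not actually prove this proposition. It is stated with the attribution ``According to F\"ollmer \cite{Follmer1986}'' and used as a cited input; no argument is given beyond that reference, and the text then simply records the backward Fokker--Planck equation as a consequence. So there is no in-paper proof to compare your attempt against.

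That said, your sketch is in the spirit of F\"ollmer's time-reversal theorem and is a reasonable outline: defining $\hat X_s=X_{T-s}$ makes the equality of one-point densities tautological, quadratic variation pins down $\hat b=b$, and the $L^2$ hypothesis on $a$ is exactly what F\"ollmer uses to ensure the backward drift $\hat a$ exists in $L^2$ so that the reversed process is again an It\^o diffusion. One minor point: you observe that $\hat b=b$ is forced by time reversal rather than being an independent hypothesis; in the paper's formulation it is listed as an assumption, but your reading is the correct structural one. The Fokker--Planck subtraction you perform at the end reproduces precisely the osmotic-velocity relation the paper states in Section~\ref{sec:velnelson}, so your closing computation dovetails with how the result is used downstream.
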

%%%%%%%%%%%%%%%%%
Under this condition, the backwards process has the Fokker-Planck equation 
\begin{flalign}
 {\partial_t} \rho  +    {\partial_x}  \left( \hat{a} \rho  \right)  + \frac{1}{2} \partial_{xx} \left( b^2  \rho \right) = 0
\end{flalign}
%%%%%%%%%%%%%%%%%%%%%%

%%%%%%%%%%%
% Sec
%%%%%%%%%%%
\subsection{Velocity fields}\label{sec:velnelson}
Given this background, the Nelson's \textit{osmotic velocity} can be defined from
\[
  a -  \hat{a}  = b^2  {\partial_x} \log{ b^2  \rho } + \phi (t)
\]
where $ \phi (t)$ is an arbitrary $\mathcal{C}^{1}$ function of time  
as 
\[
u := \frac{1}{2} \left(   a -  \hat{a} \right)  =  \frac{b^2}{2}  {\partial_x} \log{ b^2  \rho }
\]
and the \textit{current velocity} as
\[
 v := \frac{1}{2} \left( a +  \hat{a} \right) 
\]
so that a continuity equation holds for the density
\[
 {\partial_t} \rho +  {\partial_x}  \left( v \rho\right)  =0
\]
In order to derive the Schr\"odinger equation, Nelson's theory posits a special form of the acceleration without further physical argumentation \cite{Nelson2012}.
This can be considered as a drawback of the original theory.

%%%%%%%%%%%%
% Sect
%%%%%%%%%%%%
\section{Scale relativity}\label{sec:screl}

The nature of the random Wiener process described in the previous section could look mysterious and contrived. 
This is not so. An intuitive rationale is given by the \textit{Scale relativity} theory of L. Nottale.
The main tenet of the scale relativity  theory is that there is no preferred scale of description of the physical reality. 
Therefore, a physical phenomenon must be described simultaneously at all admissible scales. 
This lead Nottale to postulating some kind of fractal character of the underlying mathematical variety (i.e. a pseudo-manifold) describing the observables.
The theory of such varieties is still underdeveloped, therefore Nottale's argument should be taken only as analogy. 
%The particle trajectories are treated as virtual paths. 
Nottale further posits that the fractal driving process can be approximated in stochastic sense using a Markov process. 
While Nottale presents a heuristic argument and claims that the prescription of a Wiener process may be generalized he does not proceed to rigorously develop  the argument. 
A rigorous treatment supporting this claim was presented in \cite{Prodanov2018b}.

On the other hand, the stochastic mechanics fixes from the start the Wiener process as the driving noise. 
The question of why the Wiener process takes central stage must be addressed further.
The answer to this question can be given more easily by an approach inspired by Nottale and is partially given by the argument presented by Gillepsie \cite{Gillespie1996}. 
The original formulation in \cite{Gillespie1996} contains an explicit assumption of existence of the second moment of the distribution, which  amounts to assuming H\"older continuity of order $1/2$ as demonstrated  in  \cite{Prodanov2018b}.

The scale relativistic approach results in corrections of the Hamiltonian mechanics that arise due to the non-differentiability of paths.
Nottale introduces a complex operator differential operator, that he calls the scale derivative.
The velocity in scale relativity is not interpreted as a mathematical derivative but as finite difference. 
Therefore, from mathematical point of view, the fundamental quantities should be treated as asymptotics. 
The non-differentiability leads to introduction of two velocity fields: 
$ v_{+}$ for the \textit{forward} and $v_{-}$ for the \textit{backward} velocity.
This double field can be embedded in a complex space. 
Following Nottale \cite{Nottale1993}, the pair of velocity  fields is represented by a single complex-valued vector field  as
\[
\mathbf{v} = V  - i U 
\]
with components given by
$
V := \frac{1}{2} \left( v_{+}+ v_{-}  \right) , 
U := \frac{1}{2} \left( v_{+}- v_{-}  \right) 
$
where \textit{V} is interpreted as the  "classical" velocity and \textit{U} is a new quasi-velocity quantity (i.e. the \textit{osmotic} velocity in the terminology of Nelson).
Such representation will be called \textit{complex lifting}. 
Under this lifting Nottale introduces a complex material derivative, which is a pseudo-differential operator acting on scalar functions as
\[
\diffc{F}  =  \partial_t F   + \mathbf{v} \cdot \nabla F  - i \sigma^2 \, \nabla^2 F     
\]
where $\sigma$ is a constant, quantifying the effect of changing the resolution scale. 
Using this tool, Nottale gives a heuristic derivation of the Schr\"odinger equation from the classical Newtonian equation of dynamics. %, via a quantization procedure which follows from an extension of Einstein's relativity principle called the scale relativity principle. 

On the other hand, a different embedding choice is also possible
\[
\mathbf{u} = V  + i U 
\]
resulting in the complex-conjugated differential operator
\[
\mathcal{D}^*F  =  \partial_t F   + \mathbf{u} \cdot \nabla  F  + i \sigma^2 \, \nabla^2 F     
\]
%%%%%%%%%%%%%%%%%%%%

%%%%%%%%%%%%%%%%%%%%%
%	Section
%%%%%%%%%%%%%%%%%%%%%
\section{The Burgers equation as a stochastic  geodesic equation for the velocity field}\label{sec:geodesic}

The use of the Wiener process entails the application of the fundamental It\^o Lemma for the forward (i.e. adapted to the past, plus sign) or the backward processes (i.e. adapted to the future, minus sign), respectively. 
In differential notation it reads
\begin{flalign}\label{eq:ito2}
dF (t, X_t) =\partial_{t} F dt + dX^{+}_t  \partial_{x} F    + \frac{1}{2} \left[ dX_t^{+},dX_t^{+}  \right]\partial_{xx} F   \\
dF (t, X_t) = \partial_{t} F dt + dX^{-}_t  \partial_{x} F    - \frac{1}{2} \left[ dX_t^{-},dX_t^{-}  \right]\partial_{xx} F  
\end{flalign}
where, $\left[ dX^{\pm}_t, dX^{\pm}_t \right] = b^2 dt $ is the \textit{quadratic variation} of the process (see for example \cite[Ch. 4]{Oksendal2003}).
\begin{remark}
In its essence, the It\^o Lemma is just the generalized Taylor development in the \textit{t} variable using the algebraical substitution rules $ dt^2 \rightarrow 0$, $ dW_t^2 \rightarrow dt$, $ dW_t dt \rightarrow 0$.
%It can also be seen that  the  differential operator $d$ acts as a material derivative. 
\end{remark}
The term \textit{geodesic} will be interpreted as a solution of a stochastic variational problem \cite{Zambrini1986, Yasue1981a}.
A brief treatment is given in Appendix. \ref{sec:stochvar}. 
The stochastic variational problem reads
\[
\delta \int_0^T \left( \left( \mathcal{D}_{\tau}^{+}  {X}_{\tau}^{+}\right) ^2 - b^2 \right) d\tau =0 %\rightarrow  \mathbb{E} \ d a= 0
\]
which implies 
$
\mathbb{E} \ d a= 0
$.
By application of It\^o's Lemma the forward geodesic equation can be obtained as:
\begin{equation}\label{eq:geodes}
 {\partial_t} a  + a  \partial_{x} a  + \frac{b^2}{2}  \partial_{xx} a  =0
\end{equation}
This can be recognized as a Burgers equation with negative kinematic viscosity for the drift field \cite{Burgers1974}.

The backward geodesic equation follows from variational problem for the anticipative process
\[
\delta \int_0^T \left( \left( \mathcal{D}_{\tau}^{-}  {X}_{\tau}^{-}\right) ^2 + b^2\right)  d\tau =0 \rightarrow  \mathbb{E} \ d \hat{a}= 0
\]
By an application of the It\^o's lemma for the anticipative process one obtains
\begin{equation}\label{eq:geodes1}
 {\partial_t} \hat{a}  + \hat{a}  \partial_{x} \hat{a}  - \frac{b^2}{2}  \partial_{xx} \hat{a}  =0
\end{equation}
%%%%%%%%%%%%%
This can be recognized as a Burgers' equation with positive kinematic viscosity for the drift field. 

%%%%%%%%%%%%%
%  Section
%%%%%%%%%%%%%
\section{The real-valued Cole–Hopf transform}\label{sec:colehopftransf}

Normalization $b=1$ will be assumed further in the section to simplify presentation.
The Burgers equation can be linearized by the Cole–Hopf transformation \cite{Hopf1950, Cole1951}.
This mapping transforms the non-linear Burgers equation into the linear heat conduction equation in the following way.
Let
\[
u= {\partial_x} \log a
\]
Substitution into Eq. \ref{eq:geodes} leads to
\[
\frac{1}{2 \, u^2} \left( u\,  {\partial_{{xxx}} u}
+2 u\,   {\partial_{tx}} u  
-  {\partial_x}u \,  { \partial_{xx} u}
-2 \, { \partial_t} u  \, {\partial_x}u
\right)  =0
\]
This can be recognized as a total spatial derivative
\[
{\partial_x} \frac{1 }{u} \left( {\partial_t}  u +  \frac{1 }{2}   \partial_{xx} u \right)  = 0
\]
Therefore, the transformed equation is equivalent to a solution of the diffusion equation in reversed time
\[
{\partial_t}  u +  \frac{1 }{2}   \partial_{xx} u = 0
\]
wherever $u\neq 0$.

It should be noted that if instead of the forward development (i.e prediction)  one takes the backward development (i.e. retrodiction) the usual form of the Burgers equation is recovered:
 \[
 {\partial_t} \hat{a}  + \hat{a} {\partial_x} \hat{a}   - \frac{1}{2} \partial_{xx} \hat{a}  = 0
 \]
This corresponds to the anticipative Wiener process, which is subject to the anticipative It\^o calculus \cite{Constantin2007, Dunkel2008}.

%In this case, the usual general solution can be revealed
%\begin{flalign}\label{eq:bugsol1}
%\phi_0 (x) & = \exp \left( \frac{1}{2 \nu} \int_{0}^{x} \hat{a}_0 (z) dz \right) \\
%\hat{a} (x,t) &= \frac{\partial}{\partial x} \log \frac{1}{2 \sqrt{\pi \nu t}}  \int_{-\infty}^{\infty} \phi_0 (z) e^{-\frac{(x-z)^2}{4 \nu t}} d z =
%\frac{ \int_{-\infty}^{\infty} \frac{x- z}{t} \phi_0 (z) e^{-\frac{(x-z)^2}{4 \nu t}} d z
%}{ \int_{-\infty}^{\infty} \phi_0 (z) e^{-\frac{(x-z)^2}{4 \nu t}} d z }
%\end{flalign}
%where $\nu=1/2$ is the viscosity coefficient and $\hat{a}_0 $ is the initial condition. 

%%%%%%%%%%%%%%%%%%
%  Section
%%%%%%%%%%%%%%%%%%
\section{The Complex Material Derivatives}
\label{sec:appdiff}

%The manuscript does not give a preference of the stochastic mechanics to scale relativity. It is simply that the mathematical formalism of stochastic calculus is more mature and readily available.

For simplicity of the discussion, the section focuses on the one-dimensional case.
%%%%%%%%%%
% Sec
%%%%%%%%%
\subsection{Complex embedding}\label{sec:emcomp}
%Suppose that we run tow processes. One from an initial state labelled as 0 and a second one -- from a final state $T$. 
Consider two real-valued Brownian motions 
\begin{flalign*}
 dX_t := a dt + b dW_t \\
 d\hat{X}_{t} := \hat{a} dt +\hat{b} d\hat{W}_{t}
\end{flalign*}

The drift, resp. diffusion coefficients can be further embedded in a complex space as proposed by   Pavon \cite{Pavon1995}.
Such embedding is an isomorphism:
\begin{flalign*}
a \,\otimes  \hat{a}  & \mapsto \mathcal{V} := v  - i u \\
dX_t \, \otimes d\hat{X}_{t} & \mapsto d \mathcal{X} = \frac{1}{2} \left( dX_t \, + d\hat{X}_{t} \right) 
-  \frac{i}{2} \left( dX_t - d\hat{X}_{t} \right)
\end{flalign*}
%%%%%%%%%%%%%%%%%%%
\begin{multline*}
d \mathcal{X} = \left( v- i u\right) dt+ \frac{b}{2} dW_t +\frac{\hat{b}}{2} d\hat{W}_t
- i \left( \frac{b}{2} dW_t +\frac{\hat{b}}{2} d\hat{W}_t\right) =\\
\mathcal{V}  dt + \frac{1 - i}{2} b \; dW_t + \frac{1 + i}{2} \hat{b} \; d \hat{W}_t 
= \mathcal{V}  dt + \sqrt{-i}\sigma\left(  \frac{b   dW_t + i  \hat{b} \; d \hat{W}_t }{2 \sigma}\right) 
\end{multline*}
where
\[
\sigma =\sqrt{\frac{b^2 + \hat{b}^2}{2}}
\]
Therefore, purely algebraically, we can designate a new complex stochastic variable 
\[ 
 dZ_t: = \frac{ b dW_t+ i \hat{b} d \hat{W}_t }{\sqrt{2} \sigma} 
\]
so that in differential form
\begin{equation}\label{eq:itonot}
d \mathcal{X} = \mathcal{V}  dt + \sqrt{-i}\sigma  dZ_t
\end{equation}
So far the complex variable $Z_t$ is not completely specified. 
As an additional postulate, we assume independence of the processes. 
%We further assume that the motions are \textbf{correlated} and denote $\mathrm{corr}(W_t, \hat{W}_t )= \kappa$.
We further form the double filtration
%%%%%%%%%%%
%  Def
%%%%%%%%%%%
\begin{definition}[Double filtration]
	Consider the interval $[0, T]$ and define the double filtration
	\[
	\mathbb{F}_t^2: = \mathbb{F}_{t>0} \times \mathbb{F}_{t<T}
	\]
	where the future filtration is constrained as
	%as $  \mathbb{F}_{t<T}=\mathbb{F}_{T-t>0}$ implying
	\[
	[t_1, t_2] \in \mathbb{F}_{t>0} \Longleftrightarrow [T-t_2, T- t_1] \in \mathbb{F}_{t<T}
	\]
\end{definition}

The variable $dZ_t$ is  adapted to the double filtration and retains the martingale properties according to the Levy Characterization of Brownian motion.
Notably, 
$\mathbb{E} dZ_t =0$.
In addition,
\[
  dZ_t^2=\frac{b^2 -\hat{b}^2}{ b^2 +\hat{b}^2}    dt % + 2 i \frac{b \hat{b} }{ b^2 +\hat{b}^2}   \kappa dt
\]
%since
%$
%\mathbb{E}\left(  dW_t d\hat{W}_t\right) = \kappa dt
%$
%by hypothesis.

Moreover, 
\[
dZ_t dZ_t^* = \frac{b^2 dW_t^2 + \hat{b}^2 d\hat{W}_t^2}{b^2 + \hat{b}^2} = dt
\]
Therefore, a complex quadratic variation process can be introduced as a lift
\[
[dZ_t, dZ_t] := \frac{1}{2} \left( dZ_t -i\, dZ_t^* \right) ^2
\]
and extended by linearity 
so that 
\[
[dZ_t, dZ_t] = -i dZ_t dZ_t^*
\]

%%%%%%%%%%%%%%%%%%
We further specialize the argument by assuming that  $\hat{b}=b(T-t)$ for the stopping time $T$.
Then, since $b$ is constant, we immediately obtain
$
dZ_t^2=0
$.

%%%%%%%%%%%
%  Section
%%%%%%%%%%%
\subsection{The complex It\^o-Nottale Lemma}\label{sec:compito}
The next derivations follow the technique introduced by Pavon \cite{Pavon1997}. 
Adding and subtracting equations \ref{eq:ito2} and 5 gives
\begin{multline*}
2 dF   = 2 \partial_{t} F dt +  \left( dX^{+}_t + dX^{-}_t\right)  \partial_{x} F    + \frac{1}{2} \left[ dX_t^{+},dX_t^{+}  \right]\partial_{xx} F  - \frac{1}{2} \left[ dX_t^{-},dX_t^{-}  \right]\partial_{xx} F = \\
\left( dX^{+}_t + dX^{-}_t\right)  \partial_{x} F = 2 \partial_{t} F dt + 2 v dt \partial_{x} F 
\end{multline*}
and
\begin{multline*}
0  = \left(  dX^{+}_t - dX^{-}_t\right)   \partial_{x} F    + \frac{1}{2} \left[ dX_t^{+},dX_t^{+}  \right]\partial_{xx} F + \frac{1}{2} \left[ dX_t^{-},dX_t^{-}  \right]\partial_{xx} F = \\
2 u dt \partial_{x} F + b^2 dt \, \partial_{xx} F \\
\end{multline*}
Therefore, in components one can write
\begin{equation}\label{eq:compito2}
d F = \left(\partial_t F + \mathcal{V}  \partial_{x} F   -\frac{ i b^2}{2}  \partial_{xx}  F \right)   dt + \sqrt{-i} b \, \partial_{x} F  \, d Z_t
\end{equation}
Therefore, a complex lifted It\^o-Nottale differential can be introduced in exactly the same way
\begin{equation}\label{eq:compito}
d F  :=  \partial_t F dt + d \mathcal{X} \partial_{x} F +\frac{1}{2}  \left[   d \mathcal{X}, d \mathcal{X} \right]   \partial_{xx} F  
\end{equation}
with   quadratic variation  $ \left[  d \mathcal{X}, d \mathcal{X} \right] =  -i dZ_t dZ_t^*= -i b^2 dt$.
%%%%%%%%%%%%%%%

It should be noted that the complex embedding is not unique. 
An alternative complex embedding is given by
\begin{flalign*}
a \,\otimes  \hat{a}  & \mapsto \mathcal{U} := v  + i u \\
dX_t \, \otimes d\hat{X}_{t} & \mapsto d \mathcal{X} = \frac{1}{2} \left(  X_{t+ d t} + X_{t - d t} \right) 
+ i \frac{1}{2} \left(  X_{t+ d t} - X_{t - d t} \right)
\end{flalign*}
with the quadratic variation is
$ \left[  d \mathcal{X}, d \mathcal{X} \right] = -i b^2 dt$.
In this case, the quadratic variation can be defined as 
$$ \left[  dZ_t, dZ_t \right]^* =   \frac{1}{2}(dZ_t^* +i dZ_t )^2$$
implying,  $ \left[  d \mathcal{X}, d \mathcal{X} \right] =  i dZ_t dZ_t^*= i b^2 dt$.

The same application as above gives the It\^o equation for the drift 
\begin{equation}\label{eq:compito3}
d G = \left(\partial_t G + \mathcal{U}  \partial_{x} G   +\frac{ i b^2}{2}  \partial_{xx}  G \right)   dt + \sqrt{i} b \, \partial_{x} G    \; d Z_t
\end{equation}
%%%%%%%%%%%%%%%
Moreover,  
\begin{equation}\label{eq:compito34}
d G^* = \left(\partial_t G^* + \mathcal{U}^*  \partial_{x} G^*   -\frac{ i b^2}{2}  \partial_{xx}  G^* \right)   dt + \sqrt{-i} b \, \partial_{x} G^*    \; d Z_t^*
\end{equation}
which, can be recognized as the forward It\^o equation.
Therefore, the equations are dual by complex conjugation.

%%%%%%%%%%%%%
%  Sec
%%%%%%%%%%%%%
\section{The Complex Cole-Hopf Transform}\label{sec:colehpfcompl}

The stochastic geodesic equation can be introduced in the complex setting as well. 
In this case, the geodesic equation reads 
\[
\mathbb{E} \ d \mathcal{V} = 0
\]
and can be derived from the variational problem
\[
\delta \int_0^T \left( \mathcal{D}_\tau \mathcal{X}\right) ^2 d\tau =0 %\rightarrow  \mathbb{E} \ d \mathcal{V} = 0
\]
Note that in this case no regularization of the drift is necessary. This is so because for a constant diffusion coefficient the quadratic variation vanishes: $dZ_t^2=0$.

In the complex case, starting from the generalized It\^o differential, the complex velocity field becomes 
\[
d \mathcal{V} = \left({\partial_t} \mathcal{V} + \mathcal{V} {\partial_x} \mathcal{V} -\frac{ i b^2}{2} \partial_{xx}  \mathcal{V} \right)   dt + \sqrt{-i} b\, {\partial_x} \mathcal{V} \; d Z_t
\]
Therefore, the geodesic equation reads
\[
{\partial_t} \mathcal{V} + \mathcal{V} {\partial_x} \mathcal{V} -\frac{ i b^2}{2}\, \partial_{xx} \mathcal{V} = 0
\]
which can be recognized as a generalized Burgers' equation with imaginary kinematic viscosity coefficient. 
Applying the complex  Cole–Hopf transformation as \cite{Lage2002}
\[
\mathcal{V} = -i { \partial_x} \log{U}, \quad  -\pi <\arg{U} <\pi
\]
and specializing to $b=1$ leads to the equation
%\[
%-\frac{U\, \left( \frac{{{\partial}^{3}}}{\partial {{x}^{3}}} U\right) -\left( \frac{\partial}{\partial x} U\right) \, \left( \frac{{{\partial}^{2}}}{\partial {{x}^{2}}} U\right) -2  i \left( \frac{\partial}{\partial t} U\right) \, \left( \frac{\partial}{\partial x} U\right) +2   i U\, \left( \frac{{{\partial}^{2}}}{\partial t \partial x} U\right) }{2 {{U}^{2}}} =0
%\]
%Which can be recognized as a gradient
\[
-  {\partial_x} \frac{ 1 }{U} \left( i \,  {\partial_t} U + \frac{ 1}{2}\, \partial_{{xx}} U \right) = 0 
\]
which can be recognized as a gradient.
The last equation is equivalent to the solution of the free Schr\"odinger equation.
On the other hand, the diffusion part is simply
\[
- \sqrt{i} \left(  \partial_{ xx} \log U \right)  dZ_t = 	- \sqrt{i} \left( {\partial_x} \frac{1}{U}  {\partial_x}  U \right)
\]
since $  -i \sqrt{-i}=- \sqrt{i} $.

\begin{remark}
The above calculations can be reproduced using the computer algebra system Maxima \cite{Prodanov2018a}.
\end{remark}

Having demonstrated the solution technique, it is instructive to investigate multidimensional generalizations of the Burgers equation and the Cole-Hopf transform.

\section{Geometric algebra}
In the following section we use the convection of denoting the scalars with Greek letters, the vectors with lowercase Latin letters and multivectors or blades with capital Latin letters.
The Euclidean geometric algebra $\fclass{G}{3} \left( \fclass{R}{} \right)  $ is generated by the set of 3 orthonormal basis vectors $ E= \{e_{1}, e_{2}, e_{3} \}$
for which the so-called \textit{geometric} product is defined with properties
\begin{flalign}
e_{1} e_{1}  & = e_{2} e_{2} = e_{3}  e_{3} = 1 \\
e_{i} e_{j}  & = -  e_{j} e_{i}, \quad i \neq j
\end{flalign}
%%%%%%%%%%%%%
An overview of the topic can be found, for example in the book \cite{Macdonald2011}.
The geometric product of two vectors can be decomposed into a symmetrical \textbf{scalar product} and an antisymmetrical \textbf{wedge product}: 
\[
a \, b = a \ast b + a \wedge b \,  ,  \quad a \ast b = b \ast a \, , \quad  a \wedge b = -  b \wedge a
\]
The scalar product is defined simply as the scalar part of the geometric product between multivectors:
\[
A \ast B = \left\langle A B \right\rangle_{0}
\]
where the notation $\left\langle  \right\rangle_{k}$ the part of the multivector sum of grade $k$.
Furthermore, the wedge product is extended for blades (products of basis vectors) as
\[
a \wedge A_k = \frac{1}{2} \left(a A_ k + (-1)^k A_k a \right) 
\]
The Hestenes contraction is a symmetrical operation defined for multivectors of grades \textit{r} and  \textit{l}, respectively, as  
\[
A_r \cdot B_l := \sum_{|r-s|>0} \left\langle AB \right\rangle_ {|r-s|}
\]
while for scalars $ \alpha \cdot A =0$.
Therefore, for vectors
\[
a \cdot b = a \ast b
\]
It is noteworthy also that for a vector and a blade \cite[Ch. 1, p.3]{Hestenes2015}
\[
a \cdot A_k= \frac{1}{2} \left(a A_k - (-1)^k A_k a \right) 
\]
which extends the geometric product decomposition also to the products of vectors and blades:
\[
a A = a \cdot A + a \wedge A 
\]
It should be noted that unlike the scalar product the contraction operation is not associative in the general case. 
%Moreover, another useful identity can be given for three multivectors $A,B, C$ \cite{Macdonald2011}
%\[
%A \cdot (B \cdot C) = (A \wedge B) \cdot C
%\] 

In the most general setting the geometric algebra is a subset of the Clifford algebra \Cl{p,q}.
What is remarkable is the Clifford algebra embedding theorem, which states that the Euclidean geometric algebra is isomorphic to the even part of the Space-Time Algebra $\Cl{1,3}^{+} $:
\[
\fclass{G}{3}  \cong \Cl{1,3}^{+} 
\]
Therefore, all statements concerning Euclidean vectors can be translated into statements about space-time bivectors and vice versa. 
This allows for an immediate generalization of the theorems of vector analysis using similar notation.     

%%%%%%%%%%%
%  Section
%%%%%%%%%%%%
\subsection{Geometric calculus}\label{sec:gcalc}
The geometric derivative subsumes divergence, curl and gradient operations of the vector calculus.
Introduction on the topic can be found in  \cite{Macdonald2012}.
It is defined in the simplest way as 
\[
\nabla f= e^j \partial_{x_j} f
\]
where $e^j $ are the components of the dual or reciprocal basis, such that
\[
x= x_i e^j=x^i e_i
\]
for an arbitrary vector $x$.
For the dual basis $e^i \ast x = x_i $ since $e^i e_j = e^i \ast e_j = \delta_{i,j}$, where the last symbol is the Kronecker's symbol.

The geometric derivative is co-ordinate independent.
Moreover, it splits into a grade-lowering and grade-increasing parts
\[
\nabla f= \nabla \cdot f + \nabla \wedge f
\]
The dot represents the Hestenes contraction operation (see discussion in \cite{Dorst2002}).

%%%%%%%%%%%%%
% Sect
%%%%%%%%%%%%%
\subsection{The stretched gradient operator}\label{sec:strgrad}
%%%%%%%%%%%%
% Def
%%%%%%%%%%%%
\begin{definition}\label{def:strg}
The \textit{stretched gradient} operator  
$\mathcal{C}(\nabla)$ is the linear operator acting on the gradient by anisotropically scaling the reciprocal basis vectors along the vector \textit{c} 
\[
\mathcal{C}: e^k \mapsto c_k e^k
\]
%Let us call $\mathcal{C}(\nabla)$ the \textit{stretched gradient} operator.	
So that
\begin{equation}\label{eq:coletr}
\mathcal{C}(\nabla) = \left(  c \cdot e^k \right)  e^k \partial_{x_k} 
\end{equation}
\end{definition}
%%%%%%%%%%%%%%%%%%

Then it is clear that the stretched gradient commutes with the gradient and the time derivative
\[
\partial_t \mathcal{C}(\nabla) = \mathcal{C}(\nabla) \partial_t
\]
and
\[
\nabla^2 \mathcal{C}(\nabla) =  \mathcal{C}(\nabla) \nabla^2 
\]
under the assumption that $c$ is spatially constant.
Also, in components
\[
\left( \mathcal{C}(\nabla) F \right) \cdot \nabla = c_i\partial_{x_i} F \partial_{x_i}
\]
for a scalar function \textit{F}.  
Furthermore, for a spatially constant scaling $c$
\[
\mathcal{C}(\nabla) \mathcal{C}(\nabla) = c_i^2 \partial^2_{x_i}
\]
%%%%%%%%%%%%
%  Prop
%%%%%%%%%%%%
\begin{proposition}\label{prop:comm1}
\[
\mathcal{C}(\nabla) F \cdot \nabla %= c_i e^i\partial_{x_i} F \partial_{x_i}
= \left(  c \cdot e^i \right)  \partial_{x_i} F \partial_{x_i} 
=  \partial_{x_i} F \left(  c \cdot e^i \right) \partial_{x_i} =
\nabla F \cdot \mathcal{C}(\nabla) 
\]
for a scalar function \textit{F} and spatially constant \textit{c}.
\end{proposition}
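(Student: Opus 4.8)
The plan is to prove all three equalities by expanding everything in the reciprocal basis and reducing the Hestenes contractions to ordinary scalar products, for which no associativity is needed. First I would unpack the two operators from Definition~\ref{def:strg} and Section~\ref{sec:gcalc}, namely $\mathcal{C}(\nabla) = \left( c \cdot e^k \right) e^k \partial_{x_k}$ (the defining formula~\eqref{eq:coletr}) and $\nabla = e^j \partial_{x_j}$, adopting the convention — implicit throughout the section — that in a product of operators the \emph{leftmost} differential operator acts on $F$ first, producing a vector field, which is then contracted with whatever operator stands to its right.

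For the left-hand equality, applying $\mathcal{C}(\nabla)$ to the scalar $F$ yields the vector $\left( c \cdot e^k \right)\left( \partial_{x_k} F \right) e^k$; contracting it with $\nabla$, and using that the Hestenes dot of two vectors coincides with their symmetric scalar product together with the orthonormality $e^k \ast e^j = \delta^{kj}$, collapses the double sum to $\left( c \cdot e^i \right)\left( \partial_{x_i} F \right)\partial_{x_i}$. This is precisely the component identity recorded just above the proposition, so that step is immediate. The middle equality is pure scalar bookkeeping: since $c$ is spatially constant, $c \cdot e^i$ is a numerical constant and $\partial_{x_i} F$ an ordinary scalar function, so $\left( c \cdot e^i \right)\left( \partial_{x_i} F \right) = \left( \partial_{x_i} F \right)\left( c \cdot e^i \right)$ term by term. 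For the right-hand equality I would run the same computation in the opposite order: write $\nabla F = \left( \partial_{x_j} F \right) e^j$, contract with $\mathcal{C}(\nabla) = \left( c \cdot e^k \right) e^k \partial_{x_k}$, and again invoke the symmetry of the scalar product, $e^j \ast e^k = e^k \ast e^j = \delta^{jk}$, to obtain $\left( \partial_{x_i} F \right)\left( c \cdot e^i \right)\partial_{x_i}$, matching the middle expression.

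Since only vector--vector contractions occur, the non-associativity of the Hestenes dot flagged after its definition never enters, and the commutation of $\mathcal{C}(\nabla)$ with $\partial_t$ and $\nabla^2$ from Section~\ref{sec:strgrad} is not needed either. The only point demanding a little care is the operator-ordering convention: one must check that in $\mathcal{C}(\nabla) F \cdot \nabla$ the operator $\mathcal{C}(\nabla)$ differentiates $F$ but not the factor produced to the right of the dot, and symmetrically for $\nabla F \cdot \mathcal{C}(\nabla)$ — once this is pinned down, the statement reduces to a one-line index computation. I therefore expect no genuine obstacle: the content of the proposition is simply that stretching along a spatially constant $c$ and contracting can be carried out in either order, which is a direct consequence of the symmetry $e^i \ast e^j = e^j \ast e^i$ of the scalar product.
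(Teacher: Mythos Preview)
Your proposal is correct and mirrors the paper's approach: the paper offers no separate proof, since the statement itself is written as the chain of equalities $\mathcal{C}(\nabla) F \cdot \nabla = (c\cdot e^i)\,\partial_{x_i}F\,\partial_{x_i} = \partial_{x_i}F\,(c\cdot e^i)\,\partial_{x_i} = \nabla F \cdot \mathcal{C}(\nabla)$, and your component expansion, use of $e^k\ast e^j=\delta^{kj}$, and the commutation of the scalar constants $c\cdot e^i$ with $\partial_{x_i}F$ are exactly what justify each step.
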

On the other hand, the following identity holds true
%%%%%%%%%%%%
%  Prop
%%%%%%%%%%%%
\begin{proposition}\label{prop:comm2}
\[
\left( \mathcal{C}(\nabla) F \cdot \nabla\right)  \mathcal{C}(\nabla) F = 
\mathcal{C}(\nabla) F \cdot \left(\nabla\,  \mathcal{C}(\nabla) F\right) =\frac{1}{2}\nabla \left(  \mathcal{C}(\nabla) F \right) ^2 
\]
for a scalar function \textit{F}. 	
\end{proposition}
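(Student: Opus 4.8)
\emph{Proof strategy.} The plan is to read the identity componentwise in the orthonormal generating basis and to recognise it as the geometric-algebra counterpart of the classical vector identity $(v\cdot\nabla)v=\tfrac12\nabla v^{2}$, valid for a gradient (curl-free) field; the point is that $\mathcal{C}(\nabla)F$, being the \emph{stretched} gradient of a scalar, is curl-free in the appropriate sense. Throughout write $w:=\mathcal{C}(\nabla)F$, a vector field whose $e^{k}$-component is $w_{k}=\bigl(c\cdot e^{k}\bigr)\,\partial_{x_{k}}F$.

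First I would dispose of the leftmost equality, which is only a matter of associating the Hestenes contraction with the gradient operator rather than with the differentiated field: $\mathcal{C}(\nabla)F\cdot\bigl(\nabla\,\mathcal{C}(\nabla)F\bigr)$ and $\bigl(\mathcal{C}(\nabla)F\cdot\nabla\bigr)\mathcal{C}(\nabla)F$ both denote the directional derivative of $w$ along $w$, and by Proposition~\ref{prop:comm1} the scalar operator $\mathcal{C}(\nabla)F\cdot\nabla$ equals $\sum_{i}\bigl(c\cdot e^{i}\bigr)\bigl(\partial_{x_{i}}F\bigr)\,\partial_{x_{i}}$ (here one also uses that a vector contracted against the grade-$0$ part of $\nabla w$ vanishes, by the convention $\alpha\cdot A=0$). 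Applying this operator to $w$ gives, in the $e^{l}$ slot, $\sum_{i}\bigl(c\cdot e^{i}\bigr)\bigl(c\cdot e^{l}\bigr)\bigl(\partial_{x_{i}}F\bigr)\bigl(\partial_{x_{i}x_{l}}F\bigr)$.

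The content is the second equality. In that expression I would invoke Schwarz's theorem, $\partial_{x_{i}x_{l}}F=\partial_{x_{l}x_{i}}F$, together with the product rule $\bigl(\partial_{x_{i}}F\bigr)\bigl(\partial_{x_{l}x_{i}}F\bigr)=\tfrac12\,\partial_{x_{l}}\!\bigl[(\partial_{x_{i}}F)^{2}\bigr]$; since $c$ is spatially constant, each scaling coefficient commutes past $\partial_{x_{l}}$ (the commutation already recorded just before Proposition~\ref{prop:comm1}), so the sum over $i$ collapses to $\tfrac12\,\partial_{x_{l}}$ of a quadratic form in the first derivatives of $F$. Reassembling over $l$ against the basis vectors $e^{l}$, the surviving scaling factors combine into the relevant square $\bigl(\mathcal{C}(\nabla)F\bigr)^{2}$, and by co-ordinate independence of the gradient the whole object is $\tfrac12\nabla\bigl(\mathcal{C}(\nabla)F\bigr)^{2}$, as claimed.

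I expect the delicate point to be exactly this last bit of bookkeeping: keeping straight which occurrence of $c\cdot e^{k}$ is being differentiated and which merely multiplies, so that the quadratic form produced is genuinely the square $\bigl(\mathcal{C}(\nabla)F\bigr)^{2}$ and not some neighbouring contraction. A cleaner, coordinate-free alternative I would also record is that the stretched curl vanishes, $\mathcal{C}(\nabla)\wedge\mathcal{C}(\nabla)F=0$, because the coefficient $\bigl(c\cdot e^{j}\bigr)\bigl(c\cdot e^{k}\bigr)\partial_{x_{j}x_{k}}F$ is symmetric in $j\leftrightarrow k$ while $e^{j}\wedge e^{k}$ is antisymmetric; feeding this vanishing into the geometric-product split $\nabla w=\nabla\!\cdot\!w+\nabla\wedge w$ reduces the proposition to the bare curl-free identity, which is the elementary product rule.
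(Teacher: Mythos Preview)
The paper states this proposition without proof, so there is nothing to compare against directly. Your argument, however, has a genuine gap at precisely the spot you yourself flagged as delicate.

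Write $w_k=c_k\,\partial_{x_k}F$. Your componentwise computation of $(w\cdot\nabla)w$ in the $e^{l}$ slot gives
\[
\sum_{i} c_i\,c_l\,(\partial_{x_i}F)(\partial_{x_i x_l}F),
\]
whereas the $e^{l}$ component of $\tfrac12\nabla w^{2}=\tfrac12\nabla\sum_i c_i^{2}(\partial_{x_i}F)^{2}$ is
\[
\sum_{i} c_i^{2}\,(\partial_{x_i}F)(\partial_{x_l x_i}F).
\]
These coincide only when $c_i=c_l$ for all $i,l$, i.e.\ when the scaling is isotropic. A concrete two-dimensional counterexample: take $c=(1,2)$ and $F=x_1x_2$, so that $w=x_2\,e^{1}+2x_1\,e^{2}$. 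Then $(w\cdot\nabla)w=2x_1\,e^{1}+2x_2\,e^{2}$ but $\tfrac12\nabla w^{2}=4x_1\,e^{1}+x_2\,e^{2}$.

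Your coordinate-free alternative carries the same flaw in disguise. You correctly observe that the \emph{stretched} curl $\mathcal{C}(\nabla)\wedge\mathcal{C}(\nabla)F$ vanishes, but the identity $(v\cdot\nabla)v=\tfrac12\nabla v^{2}$ requires the \emph{ordinary} curl $\nabla\wedge v$ to vanish. For $v=\mathcal{C}(\nabla)F$ one computes
\[
\nabla\wedge v=\sum_{j<k}(c_k-c_j)\,(\partial_{x_j x_k}F)\,e^{j}\wedge e^{k},
\]
which is nonzero for anisotropic $c$. Feeding the vanishing of $\mathcal{C}(\nabla)\wedge v$ into the split $\nabla v=\nabla\cdot v+\nabla\wedge v$ is therefore a non sequitur.

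In the paper's only application (the proof of Theorem~\ref{th:colhopf1}) the linearisation condition ultimately forces $c_i\equiv\lambda$, and under that homogeneity your argument does go through cleanly. So the proposition as stated is too strong; it holds, and your proof works, only with the additional hypothesis that all components of $c$ are equal.
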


%%%%%%%%%%%
%  Section
%%%%%%%%%%%%
\section{The vectorized Cole-Hopf transform}\label{sec:colehpfinh}

Consider the complex, stochastic It\^o-Nottale process
\begin{equation}\label{eq:itonottale}
d \mathcal{X} = \mathcal{V} dt + d \mathcal{Z}_t
\end{equation}
where now $d \mathcal{Z}_t= e_i dZ^i_{t}$ is also Clifford vector-valued.
Using the apparatus of Geometric algebra, the complex It\^o differential of Eq. \ref{eq:compito2} generalizes to
\begin{equation}\label{eq:itocomplex}
d F = \left(  \partial_t F + \left( \mathcal{V} \cdot  \nabla \right)  F   - \frac{ i b^2}{2}  \nabla^2    F \right)   dt + \sqrt{-i} b  \left( d \mathcal{Z}_t \cdot \nabla   \right)  F
\end{equation}
in \Cl{p,q} over the complex numbers \fclass{C}{} for a smooth function $F$.
%The proof follows by direct computation and will not be given here. 
A sketch of the proof is provided in the remark below.
%%%%%%%%%%%%%
% Remark
%%%%%%%%%%%%%
\begin{remark}
The  restrictions on the validity of the formula above are the assumptions that the diffusion coefficient must be a scalar (i.e. homogeneity of space) and
the co-ordinate processes $d{Z}_t^i$ are uncorrelated. 
%in three dimensions \cite{Guerra1995}.
In the multidimensional case, the Taylor development of $F$ in the direction of the process $dX$ is
\[
dF=\partial_t F dt +  (dX \cdot  \nabla) F + \frac{1}{2} (dX \cdot  \nabla) ^2 F +\bigoh{dX^2+dt + dX dt}
\]
On the other hand, in matrix notation
\[
(dX \cdot  \nabla) ^2 F=  \mathbf{dX} \cdot \mathbf{H}(F) \cdot \mathbf{dX}^T
\]
where $\mathbf{H}(F) = \{ h_{ij}:= \partial_{x_i}\partial_{x_j} F \}$ is the Hessian matrix,
which is the usual statement of  the Multidimensional It\^o lemma \cite{Oksendal2003}.
Evaluating for the Wiener process  $dx^i= dW^i_t$ and
using the algebraical rules $dW^i_t dW^j_t \rightarrow 0$ (independence),
 $dW^i_t dt \rightarrow 0$  and $dW^i_t \,dW^i_t    \longrightarrow b^2 dt$ we obtain
\[
\mathbf{dX} \cdot \mathbf{H}(F) \cdot \mathbf{dX}^T = b^2 dt \nabla^2 F
\]
in Geometric Algebra language.
\end{remark}
%%%%%%%%%%%%%%%

%%%%%%%%%%%%%
% Theorem
%%%%%%%%%%%%%
\begin{theorem}[Cole-Hopf linearization]\label{th:colhopf1}
	The complex godesic equation	
	\[
{\partial_t} \mathcal{V} + \left( \mathcal{V} \cdot \nabla \right)  \mathcal{V} -\frac{ i b^2}{2} \nabla^2  \mathcal{V} =- \nabla U
	\]
	is linearised by the Cole-Hopf transform 
	\[
	\mathcal{V}= -i b^2 \nabla \log{F}
	\]
	into the Schr\"odinger-type equation
	\begin{equation}\label{eq:schro1}
	i b^2 \partial_t F       =     -\frac{b^4}{2} \nabla^2 F +U F
	\end{equation}
\end{theorem}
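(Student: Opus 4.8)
The plan is to substitute the ansatz $\mathcal{V} = -i b^2 \nabla \log F$ directly into the complex geodesic equation and to collapse every term into a single gradient, after which Eq.~\ref{eq:schro1} is read off by ``integrating'' once. First I would set $\phi := \log F$ (well defined on the branch $-\pi < \arg F < \pi$, wherever $F \neq 0$), so that $\mathcal{V} = -i b^2 \nabla \phi = \nabla( -i b^2 \phi )$ is manifestly a gradient field; here one uses that $b$ is a spatial constant, so the scalar factor passes through $\nabla$. Since $F$ is smooth, $\partial_t$ and $\nabla$ commute, giving $\partial_t \mathcal{V} = -i b^2 \nabla ( \partial_t F / F )$.

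Next I would dispatch the two remaining terms on the left-hand side. Because $\mathcal{V}$ is a gradient, $\nabla \wedge \mathcal{V} = 0$, and Proposition~\ref{prop:comm2} (applied with the trivial scaling, i.e.\ to the potential $-i b^2 \phi$) yields $( \mathcal{V} \cdot \nabla ) \mathcal{V} = \frac{1}{2}\nabla \mathcal{V}^2 = -\frac{b^4}{2}\,\nabla (\nabla \phi)^2$, using $(-i b^2)^2 = -b^4$. For the viscous term, $\nabla^2 \mathcal{V} = -i b^2 \nabla ( \nabla^2 \phi )$, hence $-\frac{i b^2}{2}\nabla^2 \mathcal{V} = -\frac{b^4}{2}\nabla ( \nabla^2 \phi )$ by $i^2 = -1$. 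Collecting terms, and noting that the right-hand side $-\nabla U$ is itself a gradient, the geodesic equation becomes $\nabla\big( -i b^2\,\partial_t \phi - \frac{b^4}{2}(\nabla \phi)^2 - \frac{b^4}{2}\nabla^2 \phi + U \big) = 0$.

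It then remains to peel off the outer $\nabla$: on a connected domain the bracketed quantity must be a function of $t$ alone, which can be absorbed into $U$ without altering $-\nabla U$. Finally I would invoke the elementary identity $(\nabla \log F)^2 + \nabla^2 \log F = \nabla^2 F / F$ together with $\partial_t \log F = \partial_t F / F$ to merge the two $b^4$ terms into $-\frac{b^4}{2}\,\nabla^2 F / F$; multiplying the resulting scalar relation by $-F$ (valid where $F \neq 0$) gives exactly $i b^2 \partial_t F = -\frac{b^4}{2}\nabla^2 F + U F$, which is Eq.~\ref{eq:schro1}. Every step is reversible, so this in fact establishes the equivalence of the geodesic equation and Eq.~\ref{eq:schro1} under the transform, not merely one implication.

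The algebra is routine once Proposition~\ref{prop:comm2} and the logarithmic identity are in hand — indeed, since $b$ is a scalar it reduces to ordinary vector calculus on the scalar potential — and, as remarked in the text, can be verified mechanically in a computer algebra system. The one genuinely delicate point is the passage from the gradient equation to the scalar equation: one must justify that $\nabla(\,\cdot\,) = 0$ forces the integrand to depend on $t$ only, and keep track of the associated gauge freedom in the potential $U$.
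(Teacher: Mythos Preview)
Your proof is correct and uses the same core ingredients as the paper --- the curl-free identity $(\mathcal{V}\cdot\nabla)\mathcal{V}=\tfrac{1}{2}\nabla\mathcal{V}^2$ (Proposition~\ref{prop:comm2}), the logarithmic identity $(\nabla\log F)^2+\nabla^2\log F=\nabla^2 F/F$, and the reduction to a single gradient that is then ``integrated'' modulo a function of $t$.

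The one methodological difference is that the paper does not assume the coefficient $-ib^2$ from the outset: it starts from a general stretched-gradient ansatz $\mathcal{V}=\mathcal{C}(\nabla)\log F$ with undetermined scalings $c_i$, carries the substitution through, and then \emph{derives} the condition $c_i^2+ib^2 c_i=0$ (hence $c_i=-ib^2$) as the unique nontrivial choice that makes the quadratic terms cancel and yields an exact linearisation. Your direct verification is cleaner and fully adequate for the theorem as stated, since the transform is already given; the paper's route has the complementary virtue of explaining where the coefficient $-ib^2$ comes from and why the stretching must be homogeneous.
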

%%%%%%%%%%%%

\begin{proof}
Under so-identified assumptions, the drift equation becomes
\[
d \mathcal{V} = \left({\partial_t} \mathcal{V} + \left( \mathcal{V} \cdot \nabla \right)  \mathcal{V} -\frac{ i b^2}{2} \nabla^2  \mathcal{V} \right)   dt + \sqrt{-i} b\, \left(  d Z_t \cdot \nabla \right) \mathcal{V}  
\]
Therefore, the geodesic equation reads
\[
{\partial_t} \mathcal{V} + \left( \mathcal{V} \cdot \nabla \right)  \mathcal{V} -\frac{ i b^2}{2} \nabla^2  \mathcal{V} =0
\]
Under the separate assumption of irrotational flow $\nabla \wedge \mathcal{V} =0 $,
a generalized, inhomogeneous Cole-Hopf transform can be introduced by analogy with the scalar case as  
\[
\mathcal{V} =   %c_k e^k \partial_{x_k} \log{F}  =
 \mathcal{C}(\nabla) \log{F}   = \frac{\mathcal{C}(\nabla) F}{F}
\]
using the stretched gradient.
If $F=0$ then trivially $dF=0$.
Without loss of generality, assume $F>0$.
Under the above substitution using Props. \ref{prop:comm1} and \ref{prop:comm2}
\begin{multline*}
{\partial_t}  \mathcal{C}(\nabla) \log{F}   + \left(  \mathcal{C}(\nabla) \log{F}   \cdot \nabla \right)   \mathcal{C}(\nabla) \log{F}   -\frac{ i b^2}{2} \nabla^2   \mathcal{C}(\nabla) \log{F}   = \\
\mathcal{C}(\nabla) {\partial_t} \log{F} 
+ \frac{1}{2} \nabla \left( \mathcal{C}(\nabla) \log{F} \right) ^2 
 -\frac{ i b^2}{2}   \mathcal{C}(\nabla) \nabla^2 \log{F}
=0
\end{multline*}
On the other hand, 
\[
\nabla^2 \log{F} = \frac{\nabla^2 F}{F} - \frac{\left( \nabla F \right) ^2}{F^2}
\]
Therefore,
\[
 \mathcal{C}(\nabla) \nabla^2 \log{F} =  \mathcal{C}(\nabla) \frac{\nabla^2 F}{F} -  \mathcal{C}(\nabla) \frac{\left( \nabla F \right) ^2}{F^2}
 = \mathcal{C}(\nabla) \frac{\nabla^2 F}{F} -  \mathcal{C}(\nabla) 
 \left( \nabla \log{F} \right) ^2
\]
Therefore,
\begin{multline*}
\frac{1}{2} \nabla \left( \mathcal{C}(\nabla) \log{F} \right) ^2  -\frac{ i b^2}{2}   \mathcal{C}(\nabla)\nabla^2 \log{F}= \\
\frac{1}{2} \nabla \left( \mathcal{C}(\nabla) \log{F} \right) ^2 
-\frac{ i b^2}{2} \mathcal{C}(\nabla) \frac{\nabla^2 F}{F} + \frac{ i b^2}{2}  \mathcal{C}(\nabla) 
\left( \nabla \log{F} \right) ^2
\end{multline*}
Finally, we obtain
\[
  \mathcal{C}(\nabla) \left(  \partial_t \log{F} -\frac{ i b^2}{2}     \frac{\nabla^2 F}{F} \right) +\frac{1}{2}\left( 
   \nabla \left( \mathcal{C}(\nabla) \log{F} \right) ^2 
  +  { i b^2}  \mathcal{C}(\nabla) 
  \left( \nabla \log{F} \right) ^2 \right)   = 0
\]
Therefore, for an exact linearisation, the following equation must hold
\[
\frac{1}{2}\left( 
\nabla \left( \mathcal{C}(\nabla) \log{F} \right) ^2 
+  { i b^2}  \mathcal{C}(\nabla) 
\left( \nabla \log{F} \right) ^2 \right)=0
\]
Therefore, one obtains an algebraic system of equations for the coefficients of the stretched gradient in function of the diffusion constant. 
Let $\log F =u$, so that in components the equation reads
 \[ u= \log{F}, \quad
e^i \left( \partial_{x_i} c_i^2 \left( \partial_{x_i} u \right) ^2 + i b^2 c_i \partial_{x_i} \left( \partial_{x_i} u \right) ^2 \right) = 
e^i \left(  c_i^2   + i b^2 c_i  \right)\partial_{x_i} \left( \partial_{x_i} u \right) ^2=0
\]
Therefore, the scaling is homogenous so the coefficient can be relabeled as  $c_i \equiv \lambda$ and
\[
\lambda^2 +i b^2 \lambda =0
\]
Therefore, $ \lambda =-i b^2$.
Direct calculation verifies the identity: 
\begin{multline*}
\nabla \left( \mathcal{C}(\nabla) \log{F} \right)^2 + i b^2 \mathcal{C}(\nabla)  \left( \nabla \log{F} \right) ^2
= (-i b^2)^2 \nabla \left( \nabla \log{F} \right)^2+ i b^2 (-i b^2) \nabla \left( \nabla \log{F} \right)^2 \\
=\left(  (-i b^2)^2 +  i b^2 (-i b^2)\right) \nabla \left( \nabla \log{F} \right)^2=b^4\left(\left( -i\right) ^2 - i ^2\right) \nabla \left( \nabla \log{F} \right)^2=0 
\end{multline*}
for a real constant scalar \textit{b}.
Therefore,  exact linearisation is possible and
\begin{multline*}
 \mathcal{C}(\nabla) \left(  \partial_t \log{F} -\frac{ i b^2}{2} \frac{\nabla^2 F}{F} \right) =
-i b^2 \nabla \left(  \partial_t \log{F} -\frac{ i b^2}{2}  \frac{\nabla^2 F}{F} \right)= \\ \nabla \left(  \frac{ 1 }{F}\left( -i b^2 \partial_t F -\frac{b^4}{2} \nabla^2F \right)  \right) =0
\end{multline*}
Nothing in the present derivation depends on  the properties of the right-hand side (RHS) of the equation.
The left-hand side can be equated to a potential gradient from the RHS $-\nabla U $, representing physically a central force.
Therefore,
\[
\nabla \left(  \frac{ 1 }{F}\left( -i b^2 \partial_t F -\frac{b^4}{2} \nabla^2F \right)  \right) = -\nabla U
\]
So that
%\begin{equation}\label{eq:schro1}
$$
   i b^2 \partial_t F       =     -\frac{b^4}{2} \nabla^2 F +U F
$$
%\end{equation}
and we recognize the form of the Schr\"odinger equation.
\end{proof}
%%%%%%%%%%%%%%
%  Cor
%%%%%%%%%%%%%%
\begin{corollary}
	In the geodesic setting, the drift equation reads
	\[
	d \mathcal{V} = \sqrt{-i} b \left( dZ_t \cdot \nabla \right) \mathcal{V}
	\]
	Under so-identified Cole-Hopf transform the diffusion term transforms as
	\[
\sqrt{-i} b\; dZ_t \cdot \nabla  \mathcal{C}(\nabla) \log{F} =	-i \sqrt{-i} b^3 \; \left( dZ_t \cdot \nabla\right) \nabla \log{F} 
= -i \sqrt{-i} b^3 \; \nabla \left( dZ_t \cdot \left(  \nabla  \log{F}\right) \right)
	\]
\end{corollary}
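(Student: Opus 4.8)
\emph{Proof proposal.}

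The plan is to obtain both displayed identities directly from the complex It\^o-Nottale differential established in the proof of Theorem \ref{th:colhopf1}, now specialised to $F=\mathcal{V}$. First I would recall that, applied to the complex velocity field itself, that differential reads
\[
d\mathcal{V}=\left({\partial_t}\mathcal{V}+\left(\mathcal{V}\cdot\nabla\right)\mathcal{V}-\frac{ib^2}{2}\nabla^2\mathcal{V}\right)dt+\sqrt{-i}\,b\,\left(dZ_t\cdot\nabla\right)\mathcal{V}\ecma
\]
In the geodesic (force-free) setting the variational principle $\mathbb{E}\,d\mathcal{V}=0$ forces the $dt$-coefficient — which is exactly the left-hand side of the generalised Burgers equation of Theorem \ref{th:colhopf1} with $U=0$ — to vanish, leaving precisely $d\mathcal{V}=\sqrt{-i}\,b\,(dZ_t\cdot\nabla)\mathcal{V}$, the first claim.

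For the second claim I would substitute the Cole-Hopf transform supplied by Theorem \ref{th:colhopf1}: since that theorem forces all stretched-gradient coefficients to the common value $\lambda=-ib^2$, one has $\mathcal{C}(\nabla)=\lambda\nabla$ and hence $\mathcal{V}=\mathcal{C}(\nabla)\log{F}=-ib^2\,\nabla\log{F}$. Inserting this into the surviving diffusion term yields the first equality at once,
\[
\sqrt{-i}\,b\,(dZ_t\cdot\nabla)\,\mathcal{C}(\nabla)\log{F}=\sqrt{-i}\,b\,(dZ_t\cdot\nabla)\left(-ib^2\,\nabla\log{F}\right)=-i\sqrt{-i}\,b^3\,(dZ_t\cdot\nabla)\nabla\log{F}\ecma
\]
while the second equality reduces to the commutation $(dZ_t\cdot\nabla)\nabla\,\varphi=\nabla\,(dZ_t\cdot\nabla)\varphi$ for the scalar $\varphi=\log{F}$. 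I would verify this in components: since $dZ_t=e_i\,dZ_t^i$ has coefficients that do not depend on position, $(dZ_t\cdot\nabla)\nabla\varphi=dZ_t^i\,e^j\,\partial_{x_i}\partial_{x_j}\varphi=e^j\,\partial_{x_j}\!\left(dZ_t^i\,\partial_{x_i}\varphi\right)=\nabla\,(dZ_t\cdot\nabla\varphi)$, the middle step being the symmetry of mixed second partial derivatives for the smooth function $F$.

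There is essentially no obstacle here: the statement is a bookkeeping consequence of Theorem \ref{th:colhopf1} together with Clairaut's theorem on mixed partials. The only points deserving a word of care are that $dZ_t$ must be treated as spatially constant — which it is, being a stochastic increment carrying no positional dependence — so that it passes freely through the spatial derivatives, and that $\log{F}$ is $C^2$ on $\{F>0\}$, which is guaranteed by the smoothness hypothesis already in force and which additionally makes $\nabla\log{F}$ manifestly irrotational, consistent with the assumption $\nabla\wedge\mathcal{V}=0$ invoked in Theorem \ref{th:colhopf1}.
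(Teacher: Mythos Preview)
Your proposal is correct and follows exactly the reasoning the paper intends: the corollary is stated in the paper without a separate proof, as an immediate consequence of Theorem~\ref{th:colhopf1} once one uses $\mathcal{C}(\nabla)=-ib^2\nabla$ and the spatial constancy of $dZ_t$. Your explicit verification of the commutation $(dZ_t\cdot\nabla)\nabla\varphi=\nabla(dZ_t\cdot\nabla\varphi)$ via equality of mixed partials simply spells out what the paper leaves implicit.
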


%%%%%%%%%
% Sec
%%%%%%%%%
\subsection{The Conjugated Shr\"odinger  Equation}\label{sec:conjug}
The conjugated Shr\"odinger equation can be derived in the same way.

%%%%%%%%%
% Theorem
%%%%%%%%%
\begin{theorem}
	The complex geodesic equation
	\[
 {\partial_t} \mathcal{U} + \left( \mathcal{U} \cdot \nabla \right)  \mathcal{V} +\frac{ i b^2}{2} \nabla^2  \mathcal{U} = - \nabla U
	\]
	can be linearised into
	\begin{equation}\label{eq:schro2}
	i b^2 \partial_t G -\frac{b^4}{2} \nabla^2 G   +  U G =0
	\end{equation}
	where
	\[
	G = F^*, \quad \mathcal{U}= i b^2 \nabla \log{G}
	\]
\end{theorem}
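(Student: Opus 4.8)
The plan is to mirror the proof of Theorem~\ref{th:colhopf1} step by step, exploiting the duality-by-conjugation that was already established for the complex It\^o--Nottale differentials in Eq.~\ref{eq:compito34}. Since the alternative embedding $\mathcal{U}=v+iu$ produces the differential $dG=\bigl(\partial_t G + \mathcal{U}\cdot\nabla G + \tfrac{ib^2}{2}\nabla^2 G\bigr)dt + \sqrt{i}\,b\,(dZ_t\cdot\nabla)G$ (the vectorized analogue of Eq.~\ref{eq:compito3}), the geodesic condition $\mathbb{E}\,d\mathcal{U}=0$ yields $\partial_t\mathcal{U}+(\mathcal{U}\cdot\nabla)\mathcal{U}+\tfrac{ib^2}{2}\nabla^2\mathcal{U}=0$, i.e.\ the homogeneous version of the stated equation with $-\nabla U$ on the right.

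First I would introduce the stretched-gradient Cole--Hopf ansatz $\mathcal{U}=\mathcal{C}(\nabla)\log G=\mathcal{C}(\nabla)G/G$ under the irrotationality hypothesis $\nabla\wedge\mathcal{U}=0$, and substitute it into the geodesic equation exactly as in the proof above. Using Propositions~\ref{prop:comm1} and~\ref{prop:comm2} together with the identity $\nabla^2\log G = \nabla^2 G/G - (\nabla G)^2/G^2$, the nonlinear and second-order terms collapse, leaving
\[
\mathcal{C}(\nabla)\Bigl(\partial_t\log G + \tfrac{ib^2}{2}\tfrac{\nabla^2 G}{G}\Bigr) + \tfrac12\Bigl(\nabla(\mathcal{C}(\nabla)\log G)^2 - ib^2\,\mathcal{C}(\nabla)(\nabla\log G)^2\Bigr)=0.
\]
The bracketed compatibility term forces, in components, the scalar algebraic relation $\lambda^2 - ib^2\lambda=0$, whence $\lambda=ib^2$, so that $\mathcal{C}(\nabla)=ib^2\nabla$ and the ansatz becomes $\mathcal{U}=ib^2\nabla\log G$ as claimed. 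A one-line direct check (the conjugate of the displayed verification in the previous proof, with $(-i)^2-i^2$ replaced by $i^2-(-i^2)=0$) confirms the compatibility term indeed vanishes for real constant $b$.

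Then the residual equation reads $\mathcal{C}(\nabla)\bigl(\partial_t\log G + \tfrac{ib^2}{2}\nabla^2 G/G\bigr)=ib^2\nabla\bigl(\tfrac1G(\partial_t G G^{-1}\cdot\text{(...)})\bigr)$; more cleanly, it equals $\nabla\bigl(\tfrac1G(ib^2\partial_t G - \tfrac{b^4}{2}\nabla^2 G)\bigr)=0$. Equating this gradient to $-\nabla U$ (the central-force right-hand side, exactly as in Theorem~\ref{th:colhopf1}) and clearing $G$ gives $ib^2\partial_t G - \tfrac{b^4}{2}\nabla^2 G + UG=0$, which is Eq.~\ref{eq:schro2}. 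Finally, to identify $G=F^*$, I would conjugate Eq.~\ref{eq:schro1}: since $b$, $b^4$ and $U$ are real, complex-conjugating $ib^2\partial_t F = -\tfrac{b^4}{2}\nabla^2 F + UF$ yields $-ib^2\partial_t F^* = -\tfrac{b^4}{2}\nabla^2 F^* + UF^*$, i.e.\ $ib^2\partial_t F^* - \tfrac{b^4}{2}\nabla^2 F^* + UF^*=0$, which is precisely Eq.~\ref{eq:schro2} with $G=F^*$; consistency of the two Cole--Hopf relations then gives $\mathcal{U}=ib^2\nabla\log F^* = -\overline{(-ib^2\nabla\log F)}=\bar{\mathcal{V}}=v+iu$, as it must.

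The only real obstacle is bookkeeping of the signs and of the branch of $\sqrt{\pm i}$: the whole proof is the $i\mapsto-i$ mirror image of the previous one, so the main risk is a dropped sign in the compatibility relation $\lambda^2\pm ib^2\lambda=0$ or in the diffusion coefficient $\sqrt{i}\,b$ versus $\sqrt{-i}\,b$. I would therefore state at the outset that every step is obtained from the proof of Theorem~\ref{th:colhopf1} by complex conjugation of the scalar relations (using that $b$ and $U$ are real), and then just exhibit the two or three lines where the conjugation acts nontrivially, rather than re-deriving the Geometric-algebra identities, which are conjugation-invariant.
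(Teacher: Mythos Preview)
Your proposal is correct and follows essentially the same approach as the paper's own proof: both introduce the stretched-gradient Cole--Hopf ansatz, obtain the compatibility condition $\lambda^2 - i b^2 \lambda = 0$ (hence $\lambda = i b^2$, $\mathcal{U}=ib^2\nabla\log G$), and arrive at Eq.~\ref{eq:schro2} by equating the residual gradient to $-\nabla U$. The paper's proof is terser (it simply says ``using the procedure as above'' and then asserts $G=F^*$), whereas you spell out the intermediate identities and add the explicit conjugation check of Eq.~\ref{eq:schro1} to justify $G=F^*$; this extra line is a welcome clarification but not a different argument.
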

%%%%%%%%%%%%%%
\begin{proof}
	
Starting from the It\^o formula
	\[
	d F = \left(  \partial_t F + \left( \mathcal{U} \cdot  \nabla \right)  F   + \frac{ i b^2}{2}  \nabla^2    F \right)   dt + \sqrt{i} b  \left( d Z_t \cdot \nabla   \right)  F
	\]	
The drift equation  becomes  
\[
d \mathcal{U} = \left({\partial_t} \mathcal{U} + \left( \mathcal{U} \cdot \nabla \right)  \mathcal{V} +\frac{ i b^2}{2} \nabla^2  \mathcal{U} \right)   dt + \sqrt{i} b\, \left(  d Z_t \cdot \nabla \right) \mathcal{U}  
\]
Therefore, the geodesic equation reads
\[
{\partial_t} \mathcal{U} + \left( \mathcal{U} \cdot \nabla \right)  \mathcal{U} +\frac{ i b^2}{2} \nabla^2  \mathcal{U} =0
\]
Therefore, using the procedure as above we obtain the linearisation condition 
\[
\lambda^2 - ib^2 \lambda =0, \quad c_i=\lambda
\]
for the transform 
\[
\mathcal{U}= i b^2 \nabla \log{G}
\]
Finally,
\[
\nabla \left(  \frac{ 1 }{G}\left( i b^2 \partial_t G -\frac{b^4}{2} \nabla^2 G \right)  \right) = -\nabla U
\]
and
\[
  i b^2 \partial_t G -\frac{b^4}{2} \nabla^2 G   +  U G =0
\]
Therefore, we can identify $G= F^*$.
\end{proof}
%%%%%%%%%%%%%
%%%%%%%%%
% Sec
%%%%%%%%%
\section{The Complex  Fokker-Planck equation implies the Born rule}\label{sec:born}
The key to the subsequent derivation is the use the Schr\"odinger equation and its conjugate on equal grounds.
%In the subsequent presentation we use the Schr\"odinger equation and its conjugate on equal grounds. 
A complex Fokker-Planck equation can be introduced based on the reversibility of the process.

%%%%%%%%%%%%%
% Theorem
%%%%%%%%%%%%%
\begin{theorem}[Complex Fokker-Planck equation]
	The pair of Fokker-Planck equations for the real-valued processes
	\begin{flalign}
	{\partial_t} \rho  +   \nabla \left(a \rho \right) -\frac{1}{2} \nabla^2 \left( b^2 \rho \right) =0 \\
	{\partial_t} \rho  +   \nabla \left(\hat{a} \rho \right) +\frac{1}{2} \nabla^2 \left( b^2 \rho \right) =0
	\end{flalign}
	implies the complex Fokker-Planck equation and its conjugate
	\begin{align}\label{eq:compfp}
	\partial_t \rho + \nabla \cdot \left( \mathcal{V} \rho \right) +\frac{i }{2} \nabla^2 \left( b^2 \rho \right) &=0 \\
	\partial_t \rho + \nabla \cdot \left( \mathcal{U} \rho \right) -\frac{i }{2} \nabla^2 \left( b^2 \rho \right) &=0, \quad \mathcal{U}= \mathcal{V}^*
	\end{align}
	for the It\^o-Nottale process. 
\end{theorem}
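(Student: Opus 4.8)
The plan is to treat the claim as elementary linear algebra on the two given conservation laws, using that the complex velocities are by construction linear combinations of the real drifts: $\mathcal{V}=v-iu$ and $\mathcal{U}=v+iu=\mathcal{V}^{*}$, where $v=\tfrac12(a+\hat a)$ is the current velocity and $u=\tfrac12(a-\hat a)$ is the osmotic velocity. Note first that, by F\"ollmer's proposition quoted above, the assumption $b=\hat b$ is precisely what guarantees that the forward and backward Fokker--Planck equations refer to one and the same density $\rho$, so that adding and subtracting them is legitimate.

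First I would add the two real equations. The second-order terms carry opposite signs and cancel, leaving $2\,\partial_t\rho+\nabla\cdot\bigl((a+\hat a)\rho\bigr)=0$, i.e. the continuity equation $\partial_t\rho+\nabla\cdot(v\rho)=0$. Next I would subtract the forward equation from the backward one; now the drift terms combine while the diffusion terms add, giving
\[
\nabla\cdot\bigl((\hat a-a)\rho\bigr)+\nabla^{2}(b^{2}\rho)=0 ,
\]
which, using $u=\tfrac12(a-\hat a)$, is the osmotic identity $\nabla\cdot(u\rho)=\tfrac12\nabla^{2}(b^{2}\rho)$.

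It then remains to assemble. Substituting $\mathcal{V}=v-iu$ and regrouping,
\[
\partial_t\rho+\nabla\cdot(\mathcal{V}\rho)+\tfrac{i}{2}\nabla^{2}(b^{2}\rho)
=\bigl(\partial_t\rho+\nabla\cdot(v\rho)\bigr)-i\bigl(\nabla\cdot(u\rho)-\tfrac12\nabla^{2}(b^{2}\rho)\bigr) ,
\]
and both bracketed quantities vanish by the previous two steps, which is the asserted complex Fokker--Planck equation. Its conjugate follows by complex conjugation: $\rho$ and $b$ are real-valued, $\partial_t$ and $\nabla$ are real operators, and $\mathcal{U}=\mathcal{V}^{*}$, so conjugating the last display yields $\partial_t\rho+\nabla\cdot(\mathcal{U}\rho)-\tfrac{i}{2}\nabla^{2}(b^{2}\rho)=0$.

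The computation is immediate, so the only point requiring care is grade bookkeeping in the geometric-algebra setting: the drifts $a,\hat a$ — hence $v,u,\mathcal{V},\mathcal{U}$ — are Clifford-vector valued, so $\mathcal{V}\rho$ is a vector and $\nabla(\mathcal{V}\rho)$ splits into a scalar divergence part and a bivector curl part. Since the conservation law is a scalar equation for $\rho$, one must read $\nabla$ throughout as the contraction $\nabla\cdot$ and retain only the grade-zero part. I expect verifying this consistent reading of the two source equations — and checking that, unlike in the Cole--Hopf theorems, no irrotationality hypothesis is invoked here — to be the sole subtlety; everything else is the two-line combination above.
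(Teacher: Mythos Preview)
Your proposal is correct and follows essentially the same approach as the paper: add and subtract the two real Fokker--Planck equations to obtain the continuity equation $\partial_t\rho+\nabla\cdot(v\rho)=0$ and the osmotic identity $2\nabla\cdot(u\rho)=\nabla^{2}(b^{2}\rho)$, then combine these into the complex equation via $\mathcal{V}=v-iu$. Your write-up is in fact slightly more explicit than the paper's, which simply records the sum and difference and then states the complex pair without spelling out the real/imaginary decomposition you give.
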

\begin{proof}
%In a similar way, 
Starting from the  Fokker-Planck equations for the forward and backward processes 
\begin{flalign*}
{\partial_t} \rho  +   \nabla \left(a \rho \right) -\frac{1}{2} \nabla^2 \left( b^2 \rho \right) =0 \\
{\partial_t} \rho  +   \nabla \left(\hat{a} \rho \right) +\frac{1}{2} \nabla^2 \left( b^2 \rho \right) =0
\end{flalign*}
and taking sums and differences we obtain
\begin{flalign*}
2 {\partial_t} \rho  +   \nabla \left( \left( a +\hat{a}  \right) \rho     \right)  = 2 {\partial_t} \rho  +   2 \nabla \left(  v \rho     \right)= 0 \\
     \nabla \left( \left( a -\hat{a}  \right) \rho \right)  -   \nabla^2 \left( b^2  \rho \right) =2\nabla  \left( u \rho \right) -  \nabla^2 \left( b^2  \rho \right) =  0 \\
\end{flalign*}
Therefore, we can formulate a pair of Fokker-Planck equations for the complex velocity and its conjugate as
\begin{align*}
\partial_t \rho + \nabla \cdot \left( \mathcal{V} \rho \right) +\frac{i }{2} \nabla^2 \left( b^2 \rho \right) &=0 \\
\partial_t \rho + \nabla \cdot \left( \mathcal{U} \rho \right) -\frac{i }{2} \nabla^2 \left( b^2 \rho \right) &=0, \quad \mathcal{U}= \mathcal{V}^*
\end{align*}
%%%%%%%%%%%%%%
\end{proof}
%%%%%%%%%%%%%%%%%%%%

%%%%%%%%%%%%%%%%%%%%
Finally, the Born rule can be derived as simple consequence of the complex Fokker-Planck equations.

%%%%%%%%%%%%%%%
%  Theore
%%%%%%%%%%%%%%%
\begin{theorem}[Born's rule]\label{th:born}
	Suppose that the above complex Fokker-Planck equations \ref{eq:compfp} hold.
	Then
	\[
	\rho= F F^{\star}
	\]
	where $F$ and $F^{\star}$ are solutions of the Schr\"odinger-type equations \ref{eq:schro1} and \ref{eq:schro2}.
 	Moreover,
 	\[
 	F = \sqrt{\rho} e^{-i S} 
 	\]
 	for an analytic phase function $S(r, t)$.
\end{theorem}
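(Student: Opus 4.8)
The plan is to work backwards from the complex Fokker--Planck equations, using the Cole--Hopf substitutions established in Theorems~\ref{th:colhopf1} and its conjugate to re-express the drift terms $\nabla\cdot(\mathcal{V}\rho)$ and $\nabla\cdot(\mathcal{U}\rho)$ entirely in terms of the Schr\"odinger solutions $F$ and $G=F^{\star}$. First I would substitute $\mathcal{V} = -i b^2\,\nabla\log F$ and $\mathcal{U} = i b^2\,\nabla\log G$ into the two Fokker--Planck equations. Writing $\rho = F G\,h$ for an unknown real positive function $h$ (which we aim to show is constant), I would expand $\nabla\cdot(\mathcal{V}\rho)$ using the product rule; the logarithmic derivative makes $\mathcal{V}\rho = -ib^2 (\nabla F/F)\,FGh = -ib^2\, G h\,\nabla F$, so the divergence term becomes $-ib^2\,\nabla\cdot(Gh\,\nabla F)$. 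Doing the same for the conjugate equation gives $ib^2\,\nabla\cdot(Fh\,\nabla G)$.

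The second step is to feed in the Schr\"odinger equations themselves. From \eqref{eq:schro1} we have $\nabla^2 F = (2/b^4)(U F - i b^2\partial_t F)$, and from \eqref{eq:schro2} the analogous expression for $\nabla^2 G$. Substituting these into the $\nabla^2(b^2\rho)$ terms of the two complex Fokker--Planck equations and expanding $\nabla^2(FGh)$ via the Leibniz rule ($\nabla^2(FGh) = G\nabla^2(Fh) + \cdots$), the potential terms $UFGh$ should cancel between the drift-converted pieces and the Laplacian pieces, and the time-derivative terms should reassemble into $\partial_t\rho = \partial_t(FGh)$. What remains, after adding (or subtracting) the pair of complex equations, ought to be a transport/diffusion identity for $h$ alone — the $F$- and $G$-dependence having dropped out. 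I would then argue that the only solution consistent with $h$ being real, positive, and globally defined (together with the normalization of $\rho$ as a probability density) is $h\equiv\text{const}$, which we absorb into the normalization, giving $\rho = FF^{\star}$.

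The main obstacle I anticipate is precisely this last uniqueness argument: showing $h$ is constant rather than merely satisfying some PDE. The bookkeeping in steps one and two is routine but delicate — one must be careful that the $-i/2$ versus $+i/2$ signs in the two Fokker--Planck equations match the $-ib^2/2$ versus $+ib^2/2$ viscosities in the two Schr\"odinger equations, and that the phases of $\sqrt{-i}$ and $\sqrt{i}$ are tracked consistently — but it will close. The genuinely non-formal point is that the reversibility hypothesis (the two real Fokker--Planck equations sharing the \emph{same} $\rho$, per F\"ollmer's proposition) is what forces $h$ constant; without it one would only get $\rho = FF^{\star}$ up to a multiplicative solution of a homogeneous transport equation. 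For the final claim, $F = \sqrt{\rho}\,e^{-iS}$, I would write $F$ in polar form $F = R e^{-iS}$ with $R,S$ real, note $FF^{\star} = R^2 = \rho$ so $R = \sqrt{\rho}$, and observe that analyticity of $F$ (as a solution of the Schr\"odinger equation) together with $F\neq 0$ on the relevant domain makes $\log F = \log\sqrt{\rho} - iS$ analytic, hence $S$ is a well-defined analytic phase; consistency of $\mathcal{V} = -ib^2\nabla\log F = -ib^2\nabla\log\sqrt{\rho} - b^2\nabla S$ with the definitions $\mathcal{V} = v - iu$, $u = \tfrac{b^2}{2}\nabla\log\rho$ (the osmotic velocity from Section~\ref{sec:velnelson}) fixes the identification and closes the argument.
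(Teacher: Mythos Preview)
Your approach is valid but takes a longer route than the paper's. The paper does not feed in the Schr\"odinger equations at all: it simply \emph{subtracts} the two complex Fokker--Planck equations first, which kills $\partial_t\rho$ outright and leaves the purely spatial identity
\[
\nabla\cdot\bigl((\mathcal{V}-\mathcal{U})\rho + i\,\nabla(b^2\rho)\bigr)=0.
\]
Only the Cole--Hopf substitutions $\mathcal{V}=-ib^2\nabla\log F$, $\mathcal{U}=ib^2\nabla\log F^{\star}$ are then needed; since $\mathcal{V}-\mathcal{U}=-ib^2\nabla\log(FF^{\star})$, the identity collapses to
\[
\nabla\cdot\Bigl(\rho\,b^2\,\nabla\log\frac{FF^{\star}}{\rho}\Bigr)=0,
\]
from which the paper reads off $\rho=FF^{\star}f(t)$ and absorbs $f(t)$ into the gauge freedom of $F$. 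Your plan---substitute into each equation separately, invoke $\nabla^2F$ and $\nabla^2G$ from the Schr\"odinger equations, expand $\nabla^2(FGh)$ by Leibniz, and watch the potential and time-derivative pieces cancel---will reach the same endpoint, but the Schr\"odinger input is doing no real work: all the cancellation you anticipate is already encoded in the subtraction step, which is why the paper never touches equations~\eqref{eq:schro1} or~\eqref{eq:schro2} in this proof.

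On the gap you flag: the paper has exactly the same soft spot. Its passage from the vanishing divergence above to $\nabla\log(FF^{\star}/\rho)=0$ is asserted (``since, in general, $\rho$ is a function of the position'') rather than proved; strictly, one has an elliptic equation with weight $\rho b^2$, and some decay or normalization hypothesis is implicitly being used. So your worry about showing $h$ constant is well placed, and neither route closes it rigorously without an extra boundary or integrability assumption. Your treatment of the polar form $F=\sqrt{\rho}\,e^{-iS}$ matches the paper's and is fine.
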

%%%%%%%%%%%%%
\begin{proof}
	We use the same notation as in the proof above. 
Subtracting the two equation leads to
\[
\nabla \left( \mathcal{V} \rho - \mathcal{U} \rho \right)  + i \nabla^2 b^2 \rho =
\nabla \cdot \left( \left( \mathcal{V}- \mathcal{U}  \right)    \rho  + i \nabla b^2 \rho \right)  =0
\]

As shown in the previous section, the Cole-Hopf transform can be specialized to
\[
\mathcal{V} =-i  b^2 \nabla \log{F h(t)}, \quad \mathcal{U}=\mathcal{V}^* =i  b^2 \nabla \log{F^* h^*(t)},
\]
up to an arbitrary analytic function of time $h(t)$. 
Then it follows that
\[
\mathcal{V}- \mathcal{U}  = -i b^2\nabla \log{F} - i b^2\nabla \log{F^*}= -i b^2 \nabla \log{F F^*}
\]
Therefore,
\[
\nabla \cdot \left(-i b^2 \left(\nabla \log{ F F^* h h^*  }\right)    \rho  + i \nabla b^2 \rho\right) 
= i \nabla \cdot \left( \rho \, b^2 \nabla \log{\frac{F F^* h h^*}{\rho \, b^2}} \right) =0
\]
Since, in general, $\rho$ is a function of the position it follows that
\[
 \nabla \log{\frac{F F^* h (t) h^*(t)}{\rho}}=0
\]
must hold. 
Therefore, in general,
\[
\rho= F F^* f(t)
\]
where, $f(t)$ is a smooth function of time.
Since, $h$ is arbitrary but analytic function, we can choose 
\[
h (t) = e^{i  g(t)}
\]
where $g(t)$ is another smooth function.
%real-valued function, such that $h^2(t)=f(t)$.
%In particular, we can choose 
Therefore, 
\[
F F^* =\rho
\]
which is the statement of the Born rule!
 
Therefore, one can write $F$ in the form
\[
F = \sqrt{\rho} e^{-i S} \equiv \psi
\]
for an analytic phase function $S$.

Consider, on the other hand, the case where
\[
\log{\frac{F F^* h (t) h^*(t)}{\rho}} = \log{q(t)} \Longrightarrow
F F^* h (t) h^*(t)= q(t) \rho
\]
for a given positive and continuous function $q(t)$.
By integration over 3D space 
\[
\int_{\Omega} F F^* h (t) h^*(t) d \omega= \int_{\Omega}  q(t) \rho \, d \omega \Longrightarrow
q(t) = h (t) h^*(t) \int_{\Omega}  F F^* d \omega 
\]
Therefore, we can transform $F$ as
\[
F^\prime = F \frac{ h(t)}{\sqrt{q(t)}} 
\]
so that the normalization
\[
\int_{\Omega}  F^\prime F^{\prime*} d \omega =1
\]
holds.	
Therefore, also in this case $F^\prime$ can be interpreted in agreement with the Born's rule. 
\end{proof}
%%%%%%%%%%

%%%%%%%%%%%%%
%  Section
%%%%%%%%%%%%%
\section{Concluding Remarks}\label{sec:disc}

This work was motivated in part by the premise that inherently non-linear phenomena need development of novel mathematical tools for their description. 
The second motivation of the present work was to investigate the potential of stochastic methods to represent quantum-mechanical and convection-diffusive systems. 

The augmented, in terms of white noise,  Newtonian dynamic leads to the stochastic geodesic equation for the drift velocity, which can be recognized as the Burgers equation.  
If in addition one assumes also path-wise reversibility, this leads to a stochastic description in terms of a pair of Burgers equations. 
This pair can be put into correspondence with a Schr\"odinger equation and its conjugate for a wave-function by means of the vectorized Cole-Hopf transform.
The use of the Fokker-Planck equation together with the Cole-Hopf transform leads to the Born rule for the wave function. 
 
The complex structure, and hence, the Schr\"odinger equation can be considered as an ingenious and economical description of the studied phenomena, however such complex structure is not necessarily fundamental.
This line of reasoning strongly points out towards the universal but epistemological (!) character of the Schr\"odinger equation and its unitary evolution. 
The Born's rule stems from the time-reversibility of the modelled diffusion processes and does not need to be postulated separately. 
This corresponds with the time-reversibility of the classical physics kinematics. 

Moreover, nowhere in these developments have we assumed anything particular about a "scale" of observations. 
Therefore, one can reasonably argue that quantum-like phenomena are not confined only to the nanoscale, but in fact can be observed as emerging phenomena on any scale of study.

\appendix

%%%%%%%%%%%%%%%%%%
%  Section
%%%%%%%%%%%%%%%%%%
\section{Appendices}

%%%%%%%%%%%%%
% Section
%%%%%%%%%%%%
\subsection{The Stochastic Variation Problem}\label{sec:stochvar}

The study of stochastic Lagrangian variational principles has been motivated initially by quantum mechanics and  optimal control problems. 
This section gives only sketch for the treatment of the problem.
The reader is directed to \cite{Zambrini1986,Yasue1981a, Pavon1995} for more details.
In the simplest form this is the minimization of the regularized functional assuming a constant diffusion coefficient $b$.

%%%%%%%%%%%%%
% Def
%%%%%%%%%%%%%
\begin{definition}\label{def:part}
	Consider the interval $I=[a,b]$. 
	A partition of $I$ is a set of $n$ numbers $ \partd_n [I]: = (a < x_1 \ldots x_{n-1} < b ) $.
\end{definition}
%%%%%%%%%%%%%%%
%  Def
%%%%%%%%%%%%%%
\begin{definition}
Let $\alpha \in \{0,1\}$. Define
\[
S_\alpha (X | t_0, T) := \llim{N}{\infty}{} \mathbb{E}\left( \left. (\mathcal{P}_N) \sum\limits_{t=t_0}^{t=T} \frac{1}{2} \frac{ \left( \Delta X_k \right)^2  }{ \Delta t_k  }  - \sigma \left(  \alpha - \frac{1}{2} \right)    b^2  \right| X_k = x (\alpha t_k   + (1-\alpha )  t_{k+1}) \right)
\]
for the sequence of partitions $\mathcal{P}_N \subset \mathcal{P}_{N+1} \in \mathbb{F}_\alpha $ and $ \sigma $ denoting the sign of the argument,  
where $\mathbb{F}_0 =\mathbb{F}_{t>0}$ and 	$\mathbb{F}_1 =\mathbb{F}_{t<T}$.
\end{definition}
%%%%%%%%%%%%%
On the first place, suppose that $\alpha =1$ and $N$ is finite.
The expectation operator and the finite summation commute so
\begin{multline*}
\mathbb{E}\left( \left. (\mathcal{P}_N) \sum\limits_{t=t_0}^{t=T} \frac{1}{2} \frac{ \left( \Delta X_k \right)^2  }{ \Delta t_k  }  - \sigma \left(  \alpha - \frac{1}{2} \right)    b^2  \right| X_k = x (\alpha t_k   + (1-\alpha )  t_{k+1}) \right)
= \\
\left. (\mathcal{P}_N) \sum\limits_{t=t_0}^{t=T} \frac{1}{2} \mathbb{E}\left( \frac{ \left( \Delta X_k \right)^2  }{ \Delta t_k  }  - \sigma \left(  \alpha - \frac{1}{2} \right)    b^2  \right| X_k = x (\alpha t_k   + (1-\alpha )  t_{k+1}) \right)
\end{multline*}
Then the increments can be interpreted as It\^o  integrals so that by the It\^o isometry since finite summation and integration commute
\begin{multline*}
\mathbb{E}\left( \left. \frac{1 }{ 2  \Delta t_k   } \left( \Delta X_k \right)^2  - \frac{1}{2}     b^2 \right| X_k =x(t_k) \right) = \\
\frac{1}{2 \Delta t_k }\left( \int_{t_k}^{t_{k+1}} a  ds \right) ^2 + \frac{1 }{    \Delta t_k   }\left( \int_{t_k}^{t_{k+1}} a  ds\right)  \mathbb{E}\left(   \int_{t_k}^{t_{k+1}} b dw \right) 
+  \frac{1}{2 \Delta t_k} \mathbb{E} \left( \int_{t_k}^{t_{k+1}} b  dw \right) ^2 - \frac{1}{2}     b^2 = \\
\frac{1}{2 \Delta t_k  }\left( \int_{t_k}^{t_{k+1}} a  ds \right) ^2 + \frac{1}{2 \Delta t_k} \int_{t_k}^{t_{k+1}} b^2  ds - \frac{1}{2} b^2  =  a (t^*)  \int_{t_k }^{t_{k+1}} a  ds, \quad t^* \in (t_k, t_{+1})   
\end{multline*}
by the Middle Value Theorem, where we use the It\^o isometry
\[
\mathbb{E} \left( \int_{t_k}^{t_{k+1}} b  dw \right) ^2 = \mathbb{E} \left(  \int_{t_k}^{t_{k+1}} b^2  dt\right) 
\]
Therefore, $S_\alpha (t_0, T) $ is minimal if the drift vanishes on  $\mathcal{P}_N$.
Suppose that $X_t$ is varied by a small smooth function $\lambda \phi(t,x)$, where the smallness is controlled by $\lambda$, then the It\^o lemma should be applied so that
$ \mathbb{E} (d \delta X_t | \mathcal{F} ) =0$ on the difference process $ \delta X_t = \lambda \phi(t,x) dt + b dW_t $.
Therefore,
\begin{equation}
\mathbb{E} \left(  d\phi | \mathcal{F} \right)  =  \lambda dt \left( {\partial_t} \phi  + \phi \ {\partial_x} \phi  + \frac{b^2}{2}  {\partial_{xx}} \phi \right)  =0 
\end{equation}
should hold.
The same calculation can be performed for  $\alpha = 0$ if the It\^o  integral is replaced by the anticipative  It\^o  integral.
In this case, $\sigma=-1$ and the integration is reversed
\begin{multline*}
\mathbb{E}\left( \left. \frac{1 }{ 2  \Delta t_k   } \left( \Delta X_k \right)^2  + \frac{1}{2}     b^2 \right| X_k =x(t_{k+1}) \right) = \\
\frac{1}{2 \Delta t_k }\left( \int_{t_{k+1}}^{t_{k}} a  ds \right) ^2 + \frac{1 }{    \Delta t_k   }\left( \int_{t_{k+1}}^{t_{k}} a  ds\right)  \mathbb{E}\left(   \int_{t_{k+1}}^{t_{k}} b dw \right) 
+  \frac{1}{2 \Delta t_k} \mathbb{E} \left( \int_{t_{k+1}}^{t_{k}} b  dw \right) ^2 + \frac{1}{2}     b^2 = \\
\frac{1}{2 \Delta t_k  }\left( \int_{t_{k+1}}^{t_{k}} a  ds \right) ^2 + \frac{1}{2 \Delta t_k} \int_{t_{k+1}}^{t_{k}} b^2  ds + \frac{1}{2} b^2  =  a(t^*)  \int_{t_k }^{t_{k+1}} a  ds, \quad t^* \in (t_k, t_{+1})   
\end{multline*}
In this case also the backward It\^o formula applies as  
\begin{equation}
\mathbb{E} \left(  d\phi | \mathcal{F} \right)  =  \lambda dt \left( {\partial_t} \phi  + \phi  {\partial_x} \phi  - \frac{b^2}{2}  {\partial_{xx}} \phi \right)  =0 
\end{equation}
%%%%%%%%%%%%%%%

\begin{remark}
The treatment of Pavon \cite{Pavon1995} uses the symmetrized functional
$S = S_0 + S_1$ together with a constraint on the anti-symmetrized functional  $S_0 - S_1$
in the present notation. 
\end{remark}
The complex geodesic principle is achieved in a more parsimonious way by defining the quantity 
%%%%%%%%%%%%
% Def
%%%%%%%%%%%
\begin{definition}
	Define
\[
\mathcal{L} ( \mathcal{X}| t_0, T) := \llim{N}{\infty}{} \mathbb{E}\left( \left. (\mathcal{P}_N) \sum\limits_{t=t_0}^{t=T} \frac{1}{2} \frac{ \left( \Delta \mathcal{X}_k \right)^2  }{ \Delta t_k  }     \right| \mathcal{X}_k = \mathcal{X} (  t_k  ) \right)
\]	
for the sequence of partitions $\mathcal{P}_N \subset \mathcal{P}_{N+1} \in \mathbb{F}^2 $ and the process
\[
\mathcal{X}_t =\int^t_0 \mathcal{V}  ds + \int^t_0 \sqrt{-i}\, b \, dZ_s
\]
\end{definition}
%%%%%%%%%%%%%%%%%
Then, in a simlar way, we take complex It\^o integrals
\begin{multline*}
\mathbb{E}\left( \left. \frac{1 }{ 2  \Delta t_k   } \left( \Delta \mathcal{X}_k \right)^2   \right| \mathcal{X}_k =x(t_{k+1}) \right) = \\
\frac{1}{2 \Delta t_k }\mathbb{E}\left( \int_{t_{k+1}}^{t_{k}} \mathcal{V}  ds \right) ^2 + \frac{ \sqrt{-i}\, b}{ \Delta t_k }\left( \int_{t_{k+1}}^{t_{k}} \mathcal{V}  ds\right) \mathbb{E}\left(   \int_{t_{k+1}}^{t_{k}}  dZ_s \right) 
+  \frac{-  i\, b^2 }{2 \Delta t_k} \mathbb{E} \left( \int_{t_{k+1}}^{t_{k}} dZ_s \right) ^2  = \\
\frac{1}{2 \Delta t_k  }\left( \int_{t_{k+1}}^{t_{k}} \mathcal{V}  ds \right) ^2   =  \mathcal{V}  (\tau)  \int_{t_{k}}^{t_{k+1}} \mathcal{V} ds , \quad \tau \in (t_k, t_{+1})
\end{multline*}
where we use the result
\[
\mathbb{E} \left( \int_{t_{k+1}}^{t_{k}}   dZ \right) ^2 = \mathbb{E} \, \Delta Z_k^2=
 \mathbb{E} \,  Z_k^2 + \mathbb{E} \,  Z_{k+1}^2 + 2 \mathbb{E} \,  Z_{k} Z_{k+1}
 =  \mathbb{E} \,  Z_k^2 + \mathbb{E} \,  Z_{k+1}^2 =0
\]
since
\[
 \mathbb{E}  Z_t^2 = \mathbb{E} \left(  W_t^2 - \hat{W}^2_t +  i  W_t \hat{W}_t\right) =  \mathbb{E} \left(  W_t^2 - \hat{W}^2_t \right) + i \mathbb{E}   \left(  W_t \hat{W}_t \right) = 0
\]
by the independence of the Brownian motions. 
Therefore, the expectation of complex drift variation $ \delta \mathcal{X}_t= \lambda \psi dt + \sqrt{-i} b dZ_t $ must vanish  as well so that
\[
\mathbb{E} \left(  d\psi | \mathcal{F} \right)  =  \lambda dt \left( {\partial_t} \psi  + \psi  {\partial_x} \psi  - i \frac{b^2}{2}  {\partial_{xx}} \psi \right)  =0 
\]
Therefore, in 3 dimensions it is immediately generalized to
\[
\mathbb{E} \left(  d\psi | \mathcal{F} \right)  =  \lambda dt \left( {\partial_t} \psi  + \left( \psi \cdot \nabla \right) \psi  - i \frac{b^2}{2} \nabla^2 \psi \right)  =0 
\]

Moreover, using the notation of mean derivatives
\[
\mathcal{L} ( \mathcal{X}| 0, T) = \int_{0}^{T} \ \left( \mathcal{D}_t  \mathcal{X}\right) ^2 dt
\]
and
\[
S_{0,1} (X |  0, T)  = \int_{0}^{T} \ \left( \mathcal{D}_t^{\pm}   {X}  ^2 \mp b^2 \right) dt
\]
by the Fubini's theorem.
Therefore, the complex variational problem is homeomorphic to the pair of real-valued variational problems, as expected. 

%%%%%%%%%%%
%  Bib
%%%%%%%%%%%
%\externalbibliography{yes}
\bibliographystyle{plain}
\bibliography{posterbib1}

%%%%%%%%%%%%%%%%%%%%%%%%%%%%%%%%%%%%%%%%%%

%%%%%%%%%%%%%%%%%%%%%%%%%%%%%%%%%%%%%%%%%%
\end{document}